\setlist[enumerate]{itemsep=0.7ex,leftmargin=7ex}
\newlist{asmlistA}{enumerate}{1}
\setlist[asmlistA]{label*=(\textbf{A\arabic*}),itemsep=0.7ex, resume}
\newlist{asmlistB}{enumerate}{1}
\setlist[asmlistB]{label*=(\textbf{B\arabic*}),itemsep=0.7ex, resume}
\newcommand{\pds}[1]{\frac{\partial}{\partial #1}}
\newcommand\bE{{\mathbb E}}
\newcommand{\toweak}{\stackrel{w}{\rightarrow}}
\newcommand\sN{{\mathcal N}}
\DeclareMathOperator\Int{Int}
\DeclareMathOperator\dom{dom}
\DeclareMathOperator*\esssup{esssup}
\newcommand{\tot}{\tfrac{1}{2}} 
\newcommand{\oo}[1]{\tfrac{1}{#1}}
\newcommand{\abs}[1]{\left| #1 \right|} 
\newcommand{\babs}[1]{\big| #1 \big|} 
\newcommand{\set}[1]{\{#1\}} 
\newcommand{\sets}[2]{\set{#1\,:\,#2}} 
\newcommand{\Bset}[1]{\Big\{#1\Big\}} 
\newcommand{\Bsets}[2]{\Bset{#1\,:\,#2}} 
\newcommand{\seq}[1]{\set{#1_n}_{n\in\N}} 
\renewcommand{\fam}[2]{\{ #1 \}_{#2}}
\newcommand{\prf}[1]{ \{ #1 \}_{t\in [0,T]}} 
\newcommand{\prfo}[1]{\{#1\}_{t\in [0,1]}}
\newcommand{\dd}{d}
\newcommand{\tRN}[2]{\tfrac{\dd #1}{\dd #2}}
\newcommand{\downto}{\searrow}
\providecommand{\R}{} \renewcommand{\R}{{\mathbb R}}
\providecommand{\N}{} \renewcommand{\N}{{\mathbb N}}
\newcommand{\PP}{{\mathbb P}}
\newcommand{\QQ}{{\mathbb Q}}
\newcommand{\EE}{{\mathbb E}}
\newcommand{\ee}[1]{ \bE \left[ #1 \right] }
\newcommand{\bee}[1]{ \bE \big[ #1 \big] }
\newcommand{\Bee}[1]{ \bE \Big[ #1 \Big] }
\newcommand{\eeq}[2]{ \bE^{#1}\left[ #2 \right] }
\newcommand{\EN}{{\mathcal E}}
\newcommand{\ba}{{\mathrm{ba}}}
\newcommand{\eps}{\varepsilon}
\newcommand{\vp}{\varphi}
\newcommand{\el}{{\mathbb L}} 
\newcommand{\lone}{\el^1}
\newcommand{\linf}{\el^{\infty}}
\newcommand{\define}[1]{{\textbf{#1}}}
\newcommand{\efor}{\text{ for }}
\newcommand{\eand}{\text{ and }}
\newcommand{\ewhere}{\text{ where }}
\newcommand\tW{{\tilde{W}}}
\newcommand\sZ{{\mathcal Z}}
\newcommand\hZ{{\hat{Z}}}
\newcommand{\FFF}{{\mathbb F}}
\newcommand{\fF}{{\mathfrak F}}
\newcommand{\fE}{{\mathfrak E}}
\renewcommand{\fE}{E}
\newcommand{\fS}{{\mathfrak S}}
\newcommand{\de}{D_{\fE}}
\newcommand{\tex}{(T,\eta,x)}
\newcommand{\peta}{\PP^{\eta}}
\newcommand{\etaine}{\eta\in \fE}
\newcommand{\petaeta}{(\peta)_{\etaine}}
\newcommand{\uV}{\underline{V}}
\newcommand{\uv}{\underline{v}}
\newcommand{\as}{\text{ a.s.}}
\def\PP{\mathbb P} 
\def\R{\mathbb{R}}
\def\sF{{\mathcal F}}
\def\sM{{\mathcal M}}
\def\sG{{\mathcal G}}
\def\sA{{\mathcal A}}
\def\sL{{\mathcal L}}
\def\sD{{\mathcal D}}
\def\sX{{\mathcal X}}
\def\sS{{\mathcal S}}
\def\sC{{\mathcal C}}
\def\E{\mathbb{E}}
\def\sF{\mathcal{F}}
\def\P{\mathbb{P}}
\def\Q{\mathbb{Q}}
\def\ld{\lambda} 
\def\sE{{\mathcal E}}
\def\N{\mathbb N}
\newcommand{\ips}{\textstyle\int_0^T \pi_u\, dS_u}
\newcommand{\uvp}{\underline{\vp}}
\newcommand{\hnu}{\hat{\nu}}
\newcommand{\smes}{\sM^{e}_{\sigma}}
\numberwithin{equation}{section}
\theoremstyle{plain}                
\newtheorem{theorem}{Theorem}[section]
\newtheorem{lemma}[theorem]{Lemma}
\newtheorem{proposition}[theorem]{Proposition}
\newtheorem{corollary}[theorem]{Corollary}
\theoremstyle{definition}           
\theoremstyle{remark}
\newtheorem{remark}[theorem]{Remark}
\begin{document}

\title{Facelifting in utility maximization}

\author{Kasper Larsen}
\address{Kasper Larsen, Department of Mathematical Sciences, Carnegie Mellon University}
\email{kasperl@andrew.cmu.edu}
\thanks{
  The authors would like to thank Mihai S\^\i rbu and Kimberly Weston for
  numerous and helpful discussions.  
  During the preparation of this work
  the second author has been supported by the Swiss National Foundation
  through the grant SNF $200021\_153555$ and by the Swiss Finance
  Institute.
  The third author has been supported by the National Science Foundation
  under Grants No.  DMS-0706947 (2010 - 2015) and Grant No.~DMS-1107465
  (2012 - 2017).  Any opinions, findings and conclusions or recommendations
  expressed in this material are those of the author(s) and do not
  necessarily reflect the views of the National Science Foundation (NSF).
}

\author{H.~Mete Soner}
\address{H.~Mete Soner, Department of Mathematics, ETH Z\" urich}
\email{mete.soner@math.ethz.ch}
 
\author{Gordan \v{Z}itkovi\'{c}}
\address{Gordan \v Zitkovi\' c, Department of Mathematics,  
University of Texas at Austin}
\email{gordanz@math.utexas.edu}

\subjclass[2010]{Primary 91G10, 91G80; Secondary 60K35.
\\\indent\emph{Journal of Economic Literature (JEL) Classification:} C61, G11}

\keywords{ Boundary layer, convex analysis, convex duality, facelift,
financial mathematics, incomplete markets, Markov processes,
utility-maximization, unspanned endowment.  }

\begin{abstract} 
We establish the existence and characterization of a primal and a dual
facelift - discontinuity of the value function at the terminal time -  for
utility-maximization in incomplete semimartingale-driven financial markets.
Unlike in the lower- and upper-hedging problems, and somewhat unexpectedly,
a facelift turns out to exist in utility-maximization despite strict
convexity in the objective function. In addition to discussing our results
in their natural, Markovian environment, we also use them to show that the
dual optimizer cannot be found in the set of countably-additive
(martingale) measures in a wide variety of situations.  
\end{abstract}

\maketitle

\section{Introduction}
\label{sec:intro}

Valuation, or pricing, is one of the central problems in mathematical
finance. Its goal is to assign a dollar value to a contingent claim based
on the economic principles of supply and demand. When the claim is liquidly
traded in a financial market, the only meaningful notion of price is the
one at which the claim is trading. When the claim is not traded, but is
replicable in an arbitrage-free market, its unique price is determined by
the no-arbitrage principle. The case of a nonreplicable claim is the most
complex one. Here, the no-arbitrage principle alone does not suffice and
additional economic input is needed.  This input usually comes in the form
of a risk profile of the agents involved in the transaction. In extreme
cases, one comes up with the notions of upper and lower hedging prices,
while in between those the pricing procedure typically involves a solution
of a utility-maximization problem.

\medskip

Utility-maximization problems arise in other contexts, as well - in optimal
investment and equilibrium problems, e.g. In fact, they play a central role
in mathematical finance and financial economics. This fact is quite evident
from the range of literature both in mathematics, as well as economics and
finance,  that treats them. Instead of providing a list of the most
important references, we simply point the reader to the monograph
\cite{KarShr98} and the references therein for a thorough literature review
from the inception of the subject to 1998. The more recent history,
at least as far as the relevance to the present paper is
concerned, can be found in the papers \cite{KraSch99} and
\cite{CviSchWan01}, where the problem is treated in great mathematical
generality. 

\medskip

Both in pricing and utility-maximization, there is a significant jump in
mathematical and conceptual difficulty as one transitions from complete to
incomplete models. In pricing, it is well known that the upper (and lower)
hedging price of a nonreplicable claim cannot be expressed as the
expectation under a (local, $\sigma$-) martingale measure (see Theorem
5.16, p.~248 in \cite{DelSch98}).  In other words, when viewed as a linear
optimization problem over the set of martingale measures, the value of the
upper hedging problem is attained only when a suitable relaxation is
introduced. This relaxation almost always (implicitly or explicitly)
involves a closure in the \mbox{weak-$*$} topology and the passage from
countably-additive to merely finitely-additive measures. This phenomenon is
well understood not only from the functional-analytic, but also from the
control-theoretic and analytic points of view.  Indeed, 
stochastic target problems
(introduced in \cite{SonTou02}; see \cite{Tou13} for an overview and
further references) provide an approach using the related
partial differential equations. We also
understand that the passage from countable to finite additivity
corresponds, loosely speaking, to the lack of weak compactness of
minimizing sequences,  and that it often corresponds to a discontinuity in
the problem's value function at the terminal time.  In mathematical finance
this naturally leads to a ``facelifting'' procedure where one upper- (or
lower-) hedges a contingent claim by (perfectly) hedging another contingent
claim whose payoff is an upper majorant of the original payoff in a
specific class (see, e.g.,  \cite{BroCviSon98}, \cite{BouTou00},
\cite{SchShrWys00}, \cite{SonTou00}, \cite{SonTou02}, and
\cite{GuaRasSch08}).

\medskip

The literature on nonlinear problems such as utility maximization and
utility-based pricing is much narrower in scope.  In this context one must
also distinguish between the need for relaxation and the existence of a
facelift. While, as we show in this paper, the existence of a facelift is
related to the non-existence of a minimizer without an appropriate
relaxation in most cases of interest, the opposite  implication does not
hold. In fact, the only known cases in the literature where it is shown
that a relaxation is necessary (in \cite{KraSch99} and \cite{HugKraSch05})
do not come with a facelift (as can be deduced from our main theorem).
Moreover, they appear in non-Markovian settings, are constructed using
heavy functional-analytic machinery (the Rosenthal's subsequence
splitting lemma in \cite{HugKraSch05}, e.g.), and do not involve a
nonreplicable random endowment.

When a nonreplicable random endowment is present - which is invariably
the case when one wants to consider pricing approaches other than marginal
utility-based pricing (such as indifference or conditional marginal pricing) - no answer can
be found in the existing literature.  Indeed, while the papers
\cite{CviSchWan01}, \cite{KarZit03} and others treat such problems
theoretically, and both pose and solve a class of dual utility-maximization
problems over an appropriate relaxation of the set of ($\sigma$- or local)
martingale measures, a proof of necessity of such an enlargement is never
given. 

One can speculate that one reason why such results do not exist is because
we never expected to see a facelift in such problems. Therefore, by a
somewhat perverted logic, we did not expect a finitely-additive
relaxation to be truly necessary, except in pathological cases. 
After all, the objective function is strictly convex - there are no
``flat parts'' to produce infinite Hamiltonians and the related explosion
in control which leads to the emergence of a facelift (see, e.g., \cite{Pha09},
Subsection 4.3.2, p.~69 for an accessible treatment). Indeed, if one tries
to apply the ``exploding Hamiltonian'' test to virtually any Markovian
incarnation of a (primal or dual) utility-maximization problem with a
random endowment, the results will be inconclusive - the Hamiltonian never
explodes. 
It came, consequently, as a great surprise to us when we discovered that
that the ``Hamiltonian test'' is impotent in this case and that the
facelift  appears virtually generically. Moreover, there is no need
for pathology at all. As we explain in our illustrative Section
  \ref{sec:illustrative}, in what one can quite confidently call the
  ``simplest nontrivial incomplete utility-maximization problem with
  nonreplicable random endowment'', the facelift invariably appears.
  Moreover, in many setups, 
  every time it appears, one can show that the corresponding dual
  problem does not admit a minimizer in the class of countably-additive
  measures. This fact is not only of theoretical value - it has important
  implications for the numerical treatment of the problem.

\medskip

After the aforementioned illustrative example in Section
\ref{sec:illustrative}, we turn to a general semimartingale model of a
financial market in Section \ref{sec:General} and analyze the asymptotic
behavior of the value function of the dual utility-maximization problem
with random endowment as the time-horizon shrinks to $0$. 
While keeping the
same underlying market structure, we let the random endowment vary with the
horizon in a rather general fashion. We show here that the limiting value
of the value function exists under minimal conditions on the inputs,
compute its value explicitly, and argue that it often differs from the
limiting value of the objective, i.e., that a facelift exists.

The choice of the shrinking time horizon - as opposed to the one of Section
\ref{sec:illustrative}, where the current time gets closer and closer to
the horizon - is made here for mathematical convenience. While it may be of
interest in it own right when one wants to study utility-maximization on very short horizons,
our main concern is to understand how the value function of the (dual)
utility-maximization problem behaves close to maturity. In Markovian
models, as described in Subsection \ref{sse:markov}, the two views can be
reconciled by observing that various control problems corresponding to
the same value of the state variable, but varying values of the time
parameter, can be coupled on the same probability space. This way, the study of 
the ``forward'' convergence of value functions can be aided by the natural RCLL
properties of trajectories of canonical Markov processes, and the abstract
results of Section \ref{sec:General}. 

In Section \ref{sec:mod}, we take up a related problem and show that under
mild conditions on the random endowment, the objective function in the dual
utility-maximization problem can be replaced by a smaller function without
 changing its value. This can be interpreted as the long-distance
incarnation of the facelift and we use it to show that 
if the random endowment is nonreplicable and its negative
admits a unique minimal (smallest) replicable majorant, then the dual
utility-maximization problem cannot have a solution among the
countably-additive measures. 

Section \ref{sec:Phi} is devoted to an in-depth study of the only
non-standard assumption made in our main theorem in Section
\ref{sec:General} - namely, the existence of the so-called germ price.
Therein, 
two general sufficient conditions are given and 
concrete examples where they hold are 
described.


\section{An illustrative example}
\label{sec:illustrative}

Before we develop a theory in a general semimartingale market model, the purpose of this section is to show that a facelift - together with all
of its repercussions such as nonattainment in the class of
countably-additive martingale measures - already appears in the simplest of models and is not a ``cooked-up'' consequence of a pathological choice of the modeling framework. 

\subsection{The market model}
On a given  time horizon $T>0$, we let $\prf{B_t}$ and $\prf{W_t}$ be two
independent Brownian motions, and $\FFF=\prf{\sF_t}$ the standard
augmentation of their natural filtration $\FFF^{B,W}$.  The financial
market model consists of a \define{money-market account} $\prf{S^{(0)}_t}$ and a
\define{risky security} $\prf{S_t}$. For simplicity, we assume a zero
interest rate, i.e., $S^{(0)} \equiv 1$, and we model $S$ by the geometric
Brownian motion: \begin{align}\label{def:dS_gbb} dS_t := S_t \big( \mu\,  dt
+  \sigma\, dB_t\big),\quad S_0 :=1.   \end{align} 
Assuming throughout that $\sigma > 0$ and $\mu\ne 0$, we set
$\ld=\mu/\sigma$ and interpret $\lambda$ as the \define{market price of
risk}.  So defined, our model follows completely the
Black-Scholes-Samuelson paradigm;  the Brownian motion $W$ will play a role
in the dynamics of the random endowment which we describe below.
\subsection{Trading and admissibility} The investor's initial wealth is
denoted by $x$;
at time $t\in[0,T]$ he/she holds $\pi_t$ shares of the stock $S$. The usual self-financing condition dictates that the
agent's total wealth admits the following dynamics 
\begin{align}\label{dX}
 X^{x,\pi}_t = x + \int_0^t S_
 u\pi_u \big(\mu\, du+ \sigma\, dB_u),\quad t\in[0,T].
 \end{align}
To ensure that the integral is well-defined we require 
that $\int_0^T \pi_u^2\,
du<\infty$, a.s. When, additionally, there exists a constant $a$ such that
$X^{x,\pi}_t \geq -a$, for all $t\in
[0,T]$, $\PP$-a.s.,  we call
$\pi$ \define{admissible} and we write $\pi\in \sA$. 

\subsection{Preferences and random endowment} 
For the purposes of this example, we model the agent's preferences by a
utility function of the ``power'' type, but also note that all of the statements
in this section remain true for a much larger class:
\[ U(x) := \tfrac{1}{p} x^{p}\  \efor\  p\in (-\infty,1)\setminus \set{0}\ 
\text{ or } \ 
U(x):=\log(x)\ \efor\  p=0.\] 
For definiteness, we set $U(x)=-\infty$ for $x<0$ (and at $x=0$ for $p\leq
0$). 

In addition to the investment opportunities provided by the financial
market $(S^{(0)},S)$, the investor receives a lump-sum payment (a random
endowment, stochastic income, etc.) at time $T$ of the form
$\varphi(\eta_T)$ where $\varphi:\R\to \R$ is a bounded continuous function
and $\eta_t:=\eta_0+W_t$, for $t\in [0,T]$. We note that the endowment
$\vp(\eta_T)$ cannot be replicated by trading in $(S^{(0)},S)$ as soon as
$\vp$ is not a constant function. On the other hand, more and more
information about its value is gathered by the agent as $t$ goes to $T$, so
it cannot be treated as an independent random variable, either. 

\subsection{The primal problem} 
Keeping track of the time horizon $T>0$, the initial wealth $x\in\R$ and
the initial value $\eta_0\in\R$ of the process $\eta$, we pose the
following optimization problem faced by a rational agent with
the characteristics described above: \begin{align}\label{CSW_primal}
u(T,\eta_0,x):= \sup_{\pi\in \sA} \Bee{ U\Big(X^{x,\pi}_T +
\vp(\eta_T)\Big) }. \end{align}
Let $x_c=x_c(T,\eta_0)\in\R$ be such that $u(T,\eta_0,x)=-\infty$ for
$x<x_c$ and $u(T,\eta_0,x)>-\infty$ for $x>x_c$. In \cite{CviSchWan01} it is
shown that
$-x_c$ coincides with the superreplication cost of $-\vp(\eta_T)$. In our
case, thanks to the fact that $\eta_t =
\eta_0 +W_t$, we have $x_c = - \inf \vp$, independently of $\eta_0$ and
$T>0$. 

\subsection{The dual problem} Let $\sM$ denote the set of all 
$\P$-equivalent probability measures $\QQ$ on $\sF_T$ for which $S$ defined
by \eqref{def:dS_gbb} is a $\QQ$-martingale. In our, simple, model the
structure of $\sM$ is well known and completely described. Indeed, a
probability measure $\QQ$ is in $\sM$ if and only if $\Q$'s
$\sF_T$-Radon-Nikodym derivative is given by  
$\tRN{\QQ}{\PP}=Z_T$ where $Z$ is an exponential martingale of the
(differential) form
\begin{align}\label{dZ} dZ_t^\nu = -Z^\nu_t \Big( \ld\, dB_t + \nu_t
\, dW_t\Big),\quad Z_0=1, \end{align} for some progressive 
process $\nu$ with $\int_0^T \nu^2_u\, du<\infty$, a.s.

With the dual utility function given by  $V(z) := \sup_{x>0} \Big( U(x) -
xz\Big)$, for $z>0$, we define the 
value function of the \define{dual problem}
by
\begin{align}\label{equ:dual_value2}
v(T,\eta_0,z) := \inf_{\QQ\in\sM} \Big( 
 \bee{ V(z \tRN{\QQ}{\PP})} + z\eeq{\QQ}{ \vp(\eta_T)} \Big).
\end{align}
for $z>0$, $\eta\in\R$ and $T>0$.  
\begin{remark} 
\label{rem:infZ}
To guarantee the existence of a minimizer, 
in \cite{CviSchWan01} the authors identify 
$\sM$ with a subset of $\mathbb{L}_+^1(\PP)$ and embed it, naturally,
into the bi-dual
$\ba(\PP):=\linf(\PP)^{*} \supseteq \lone(\PP)$. 
With the weak$^*$-closure of $\sM$ in $\ba(\PP)$
is denoted by $\overline{\sM}_T$ and
the dual pairing
between $\mathbb{L}^\infty(\PP)$ and $\ba(\PP)$ 
by $\langle \cdot,\cdot\rangle$, the (relaxed) dual value function is defined
there by
\begin{align}\label{dual_value1}
\tilde{v}(T,\eta_0,z) := \inf_{\QQ\in\overline{\sM}_T} \left( 
 \ee{ V\left(z \tRN{\QQ^r}{\PP}\right)} + z\Big\langle \QQ,\vp(\eta_T)\Big\rangle \right),
\end{align} 
where $\QQ^r \in \mathbb{L}^1(\PP)$ denotes the regular part in the Yosida-Hewitt decomposition $\QQ = \QQ^r + \QQ^s$ of $\QQ\in\ba_+(\PP)$. 
It is shown in 
 \cite{CviSchWan01} that the dual
 minimizer $\hat{\QQ}= \hat{\QQ}^{T,\eta,z}$ is always attained in
 $\overline{\sM}_T$. 

Theorem 2.10, p.~675 in \cite{LarZit12} states that $v=\tilde{v}$, i.e.,
that the finitely-additive relaxation is unnecessary, if one is interested
in the value function alone.
 This allows us to work with random variables
 $\frac{d\QQ}{d\PP}\in\mathbb{L}^1_+(\PP)$ in the sequel, instead of 
 finitely-additive measures and their regular parts, needed in \eqref{dual_value1}. 
\end{remark}

\subsection{A na\"ive approach via HJB} If we were to approach
the utility-maximization problem via the formal dynamic
programming principle, we would start by embedding it into a family of problems starting
at $t\in [0,T]$, with the terminal time $T$, and depending additionally on the states
$x$ and $\eta$. Thanks to the Markovian structure, and without loss of
generality, instead of varying the initial time $t$, we use the
``time-to-go'' variable $T-t$. Moreover, we abuse the notation and denote
this variable simply by $T$, giving it an alternative interpretation of the
(varying) time horizon. This way, all the effects in the regime $t\sim T$
show up at $T\sim 0$.
The formal HJB equation is now given by: 
\begin{equation}
\label{equ:HJBu}
\left\{
\begin{aligned}
 u_T &= \sup_{\pi\in\R} \sL^{\pi} u\\
 u(0,\eta,x) &= U(x+\vp(\eta)).
 \end{aligned}
\right.
\end{equation}
Here $u_T:= \frac{\partial}{\partial T}u$  and $\sL^{\pi}$ is the (controlled) formal infinitesimal generator of the process $(X^{x,\pi},\eta)$. With $X^{x,\pi}$ defined by \eqref{dX} and $\eta^\eta_t := \eta +W_t$ this generator becomes 
\[ \sL^{\pi} u := \mu \pi u_x + \tot \sigma^2 \pi^2 u_{xx}
+ \tot u_{\eta\eta}.\] 
Similarly, the HJB equation for the dual value function formally reads as
follows
\begin{equation}
\label{equ:HJBv}
\left\{
\begin{aligned}
 v_T &= \inf_{\nu\in\R} \sN^{\nu} u\\
 v(0,\eta,z) &= V(z) + z \vp(\eta),
\end{aligned}
\right.
\end{equation}
where  the dynamics \eqref{dZ} for $Z^\nu$ produces the generator
\[ \sN^{\nu} v = \tot z^2 \Big(\ld^2 + \nu^2 \Big) v_{zz}
+ \tot v_{\eta\eta} - z \nu v_{z\eta}.\]
The seemingly natural choices for the primal domain $\sD_u$ and the
interpretation of the initial condition in \eqref{equ:HJBu} are
\begin{enumerate}[label*=\arabic*'.]
\item $\sD_u=\sets{\tex\in [0,\infty)\times \R \times \R}{
  x+\vp(\eta)>0}$, and
\item $\lim_{T\downarrow 0} u\tex= U\big(x+\vp(\eta)\big)$.
\end{enumerate}
Similarly, the dual domain $\sD_v$ and the initial condition for the dual
problem are expected to be
\begin{enumerate}[label*=\arabic*'.,resume]
\item $\sD_v=\sets{(T,\eta,z)\in [0,\infty)\times \R \times \R}{
  z>0}$, and
\item $\lim_{T\downarrow 0}v(T,\eta,z)= V(z)+ z \vp(\eta)$.
\end{enumerate}
It turns out, however, that \dots
\subsection{\dots the na\" ive approach is not always the right one}
In the remainder of the paper we will  show in much greater generality that the 
prescriptions 1'-4'~ above do not fully correspond to reality. Even in the
simple Black-Scholes-type model \eqref{def:dS_gbb} with utilities of power
type, the value
functions behave quite differently. If we set 
  \[ \dom(u):= \Int\sets{\tex\in [0,\infty)\times\R\times\R}{ u\tex \in
    (-\infty,\infty)},\]
    and
  \[ \dom(v):=
    \Int\sets{(T,\eta,z)\in [0,\infty)\times\R\times (0,\infty)}{ v(T,\eta,z) \in
    (-\infty,\infty)},\]
we have the following result (a special case of Theorem \ref{thm:main} below):
\begin{proposition}
\label{pro:fl-Brown}
In the setting of the current section, we have
\begin{enumerate}
\item $\dom(u)=\Bsets{\tex\in (0,\infty)\times \R \times \R}{
  x>-\inf\vp}$.
\item 
$\lim_{T\downarrow 0} u\tex = \begin{cases} U(x+\vp(\eta)), & x\geq -
\inf \vp.\\
-\infty, & x<-\inf\vp. \end{cases}$
\item $\dom(v)=\Bsets{(T,\eta,z)\in (0,\infty)\times \R \times \R}{
  z>0}$.
\item $\lim_{T\downarrow 0} v(T,\eta,z) = \uV(\eta,z)$, where 
\begin{equation}
\label{equ:1E2F}
\begin{split}
  \uV(\eta,z):= \begin{cases}
 V(z)+z \vp(\eta), & z<z_c \\
 V(z_c) + z_c \vp(\eta)+(z-z_c) \inf\vp, & z \geq z_c,\\
 \end{cases} 
\end{split}
\end{equation}
and $z_c(\eta):=U'\big(\vp(\eta)-\inf\vp\big)\in (0,\infty]$, so that
$V'\big(z_c(\eta)\big) + \vp(\eta) = \inf \vp$.
\end{enumerate}
\end{proposition}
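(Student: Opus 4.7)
Plan: Since the paper explicitly states that this proposition is a special case of Theorem \ref{thm:main}, the cleanest route is to verify the hypotheses of that theorem in the Black-Scholes-Samuelson setting and invoke it. Nevertheless, the concrete structure here allows me to describe a direct argument, which I outline below; it is a streamlined version of the general machinery.

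Paragraph 1 (domains, parts (1) and (3)). The equality $x_c=-\inf\vp$ is precisely the CSW superreplication cost recalled in the text, yielding the lower half of the characterization of $\dom(u)$. For upper finiteness of $u$, I would use weak duality $u\tex\le v(T,\eta,z)+xz$ together with part (3); hence once (3) is established, finiteness of $v$ at any $z>0$ transfers to finiteness of $u$. For $\dom(v)$, upper finiteness follows by testing the dual infimum against the minimal martingale measure $\QQ^0$ obtained by $\nu\equiv 0$ in \eqref{dZ}: since $d\QQ^0/d\PP$ has all moments and $\vp$ is bounded, $\bE[V(z\,d\QQ^0/d\PP)]+z\bE^{\QQ^0}[\vp(\eta_T)]<\infty$. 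For lower finiteness, boundedness of $\vp$ gives $z\bE^{\QQ}[\vp(\eta_T)]\ge -z|\vp|_\infty$, while the asymptotic behavior $V(y)\ge -ay+b$ combined with $\bE[d\QQ/d\PP]=1$ provides a $\QQ$-uniform lower bound on $\bE[V(z\,d\QQ/d\PP)]$.

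Paragraph 2 (primal limit, part (2)). For the lower bound $\liminf_{T\downarrow 0}u\tex\ge U(x+\vp(\eta))$ when $x>-\inf\vp$, I would use the trivial admissible strategy $\pi\equiv 0$, so that $u\tex\ge\bE[U(x+\vp(\eta_T))]$. Continuity of $\vp$ and a.s.\ convergence $\eta_T\to\eta$ give $U(x+\vp(\eta_T))\to U(x+\vp(\eta))$ a.s.; since $x+\vp(\eta_T)$ stays in a bounded positive interval, dominated convergence closes the limit. The upper bound proceeds via weak duality: $\limsup_{T\downarrow 0} u\tex\le \underline V(\eta,z)+xz$ for every $z>0$, and a Fenchel computation shows that
\begin{equation*}
\inf_{z>0}\bigl(\underline V(\eta,z)+xz\bigr)=U(x+\vp(\eta))
\end{equation*}
for $x>-\inf\vp$, with the minimizer $z^*=U'(x+\vp(\eta))$ lying in $(0,z_c]$ precisely because $x\ge -\inf\vp$. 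For $x<-\inf\vp$, the linear piece of $\underline V$ for $z\ge z_c$ has slope $\inf\vp+x<0$, forcing the infimum to $-\infty$, which yields part (2) on the complement.

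Paragraph 3 (dual limit, part (4), and main obstacle). The $\limsup\le\underline V$ direction is reached by judicious choice of the control $\nu$ in \eqref{dZ}: taking $\nu\equiv 0$ gives $\limsup v(T,\eta,z)\le V(z)+z\vp(\eta)$ because $Z^0_T\to 1$ and $\eta_T\to\eta$; for $z\ge z_c$ one instead uses a family $\nu^{(T)}$ (e.g.\ constant-in-$t$, growing like $1/\sqrt T$) that shifts the $\QQ$-distribution of $\eta_T$ toward a sequence of points approaching an argmin of $\vp$, so that $\bE^{\QQ}[\vp(\eta_T)]\to\inf\vp$ while the penalty $\bE[V(zZ^{\nu}_T)]$ is controlled by the asymptotic flatness of $V$ and boundedness of $\vp$. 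The reverse inequality $\liminf v\ge\underline V$ is the main obstacle: it must hold uniformly over all $\QQ\in\sM$, no matter how the Radon-Nikodym derivative concentrates. My approach is to go through the primal side — use the already-established primal limit $u\tex\to U(x+\vp(\eta))$ for $x>-\inf\vp$, combine it with the strong duality identity $u\tex=\inf_{z>0}(v(T,\eta,z)+xz)$ (valid on the relaxed class by \cite{LarZit12} as recalled in Remark \ref{rem:infZ}), and deduce that any subsequential limit $\tilde v(\eta,\cdot)$ of $v(T,\eta,\cdot)$ must have Legendre conjugate equal to $x\mapsto U(x+\vp(\eta))\mathbf 1_{x\ge-\inf\vp}-\infty\cdot\mathbf 1_{x<-\inf\vp}$; convex biduality then forces $\tilde v=\underline V$. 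The delicate point, which the general Theorem \ref{thm:main} handles through the germ-price assumption discussed in Section \ref{sec:Phi}, is the interchange of $\inf_z$ and $\lim_T$ needed to pass from the primal convergence to the dual convergence — essentially an equicontinuity or local uniform boundedness statement for $v(T,\eta,\cdot)$ as $T\downarrow 0$.
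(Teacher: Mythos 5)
Your proposal takes a genuinely different route from the paper, and it has two concrete gaps. The paper literally proves Proposition~\ref{pro:fl-Brown} by invoking Theorem~\ref{thm:main} (with $\Phi=\inf\vp$), and the proof of Theorem~\ref{thm:main} goes dual-first: Lemma~\ref{lem:4938} shows $\lim_{z\to\infty}v(T,z)/z=\Phi^E_T$, Lemma~\ref{lem:1D74} deduces $\limsup_T v \leq \uV$ from the two facts that $v(T,\cdot)-z\Phi^E_T$ is convex nonincreasing and $\limsup_T v(T,z)\leq V(z)+z\vp_0$ (a submartingale argument), Lemma~\ref{lem:2299} proves $\liminf_T v\geq\uV$ via Kramkov's optional decomposition for the superreplication process $X_t$ plus Fenchel's inequality and the $ZX$-supermartingale property, and only then Lemma~\ref{lem:54E8} transfers the dual limit to the primal one by conjugation (Rockafellar--Wets 11.34). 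You invert this: primal $\liminf$ by $\pi\equiv 0$, dual $\limsup$ by control choice, primal $\limsup$ by weak duality, and then dual $\liminf$ by ``convex biduality''.

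The first gap is the one you yourself flag: the last step, dual $\liminf\geq\uV$, is precisely where the paper invests its real work (Lemma~\ref{lem:2299}, the optional-decomposition/Fenchel argument), and your proposal leaves it as ``essentially an equicontinuity or local uniform boundedness statement''. Pointwise convergence of the concave conjugates $u(T,\eta,\cdot)$ does not by itself force pointwise convergence of $v(T,\eta,\cdot)$ --- one needs an epi-convergence argument, and in particular a reason why no subsequential limit of $v(T,\eta,\cdot)$ can dip below $\uV$ while still having the correct conjugate (conjugation only identifies the closed convex hull). You also misattribute the mechanism: Assumption~\ref{ite:Phi} is used to identify \emph{which} $\uV$ appears (i.e.\ to pin down $\Phi$), not to justify an interchange of $\inf_z$ and $\lim_T$.

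The second gap is in your dual $\limsup$ sketch for $z\geq z_c$. A constant control of size $1/\sqrt{T}$ shifts the mean of $W_T$ by $O(\sqrt{T})\to 0$, so it does \emph{not} push $\eta_T$ toward an (approximate) minimizer of $\vp$; the correct rate, as in the paper's Theorem~\ref{thm:suff}, is $O(1/T)$. More importantly, even with the right rate, achieving $\limsup v(T,\eta,z)\leq\uV(\eta,z)$ for $z\geq z_c$ by a \emph{single} concentrating control family is not straightforward: the target value $V(z_c)+z_c\vp(\eta)+(z-z_c)\inf\vp$ mixes the ``no tilt'' and ``full tilt'' regimes, and the $\bE[V(zZ^{\nu}_T)]$ penalty under a concentrating $Z$ can be problematic when $V(0^+)=+\infty$ (e.g.\ power utility with $p\in(0,1)$). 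The paper avoids all of this by not constructing any such control: Lemma~\ref{lem:1D74} derives the bound purely from the convexity/slope structure of $v$ and the definition of $\uV$ as the largest convex minorant of $V(z)+z\vp_0$ whose difference from $z\Phi$ is nonincreasing. Finally, your ``cleanest route'' (invoke Theorem~\ref{thm:main}) is the right idea, but you do not actually verify Assumption~\ref{ite:Phi} in the Black--Scholes setting; this is exactly what Theorem~\ref{thm:suff} is for, and it is not a free lunch.
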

\begin{remark}
Proposition \ref{pro:fl-Brown} states that both
the primal and the dual value functions exhibit a \emph{facelift}
phenomenon:
\begin{enumerate}
\item In the primal case, the facelift ``cuts off'' a part of the
domain and leads to an effective initial condition for which the Inada
conditions fail. Indeed, $U'\big(x+\vp(\eta)\big)\not\to\infty$ as $x\to
-\inf\vp$ for all $\eta$, unless
$\vp$ is constant. On the other hand, as soon as $T>0$, we have $\pds{x}u(T,x,\eta)
\to \infty$, as $x\to -\inf \vp$, for all $\eta$.
\item 
The situation with the dual problem appears even more severe. Even though the
effective domain turns out to be exactly as expected, the limiting value 
$\uV$ of $v$ differs from 4' in the previous section. Indeed, unless
$\vp$ is constant,  we have  $\uV(z,\eta) < V(z)+z \vp(\eta)$ for $z>z_c(\eta)$.
\end{enumerate}
\end{remark}
One important consequence of the facelift is the following (the proof is
a combination of Remark \ref{rem:B1suf}, Corollary \ref{cor:nonattain},  Proposition \ref{pro:6DD4}):
\begin{proposition}
\label{pro:never-mart}
In the setting of the current section, let $(T,\eta_0)\in
(0,\infty)\times \R$  and a nonconstant
bounded and continuous function $\vp:\R\to\R$ be given. Then, 
for all large-enough $z>0$, the 
problem \eqref{equ:dual_value2} does not admit a minimizer in $\sM$.

If additionally $\E\big[z_c(\eta_0+W_T)\big]<\infty$, the previous statement holds
for all $z>0$. 
\end{proposition}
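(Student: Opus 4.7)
The plan is to identify Proposition \ref{pro:never-mart} as a direct consequence of the general non-attainment result (Corollary \ref{cor:nonattain}) proved in Section \ref{sec:mod}, whose driving idea is the ``long-distance facelift'' $\uV$ of \eqref{equ:1E2F}. The strict gap $\uV(\eta,z) < V(z) + z\vp(\eta)$ on $\{z > z_c(\eta)\}$ (present precisely where $\vp(\eta) > \inf\vp$ and $V$ is strictly convex) is the obstruction that prevents any countably-additive $\hat\QQ \in \sM$ from attaining the infimum in \eqref{equ:dual_value2}: Corollary \ref{cor:nonattain} converts such attainment into the pointwise necessary condition $z\,\tRN{\hat\QQ}{\PP} \leq z_c(\eta_T)$ almost surely.

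First I would verify the structural hypothesis behind Corollary \ref{cor:nonattain}. In the form of Remark \ref{rem:B1suf}, it reduces to the existence of a unique minimal replicable majorant of $-\vp(\eta_T)$. In the Black--Scholes model of this section, $S$ is $B$-adapted while $\eta_T = \eta_0 + W_T$ is $W$-measurable with $B \perp W$, so no nontrivial stock-trading strategy can hedge any $W$-measurable payoff; the constant $-\inf\vp$ is therefore the unique (in fact, constant) minimal replicable majorant of $-\vp(\eta_T)$, and (B1) follows from Remark \ref{rem:B1suf}.

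For the first (``large $z$'') claim, continuity and nonconstancy of $\vp$ produce an open set of $\eta$ on which $z_c(\eta)$ is finite, hence an event of positive $\PP$-measure on which $z_c(\eta_T)$ is bounded. Combined with $\tRN{\hat\QQ}{\PP} > 0$ a.s.\ and $\bE[\tRN{\hat\QQ}{\PP}] = 1$, the constraint $z\,\tRN{\hat\QQ}{\PP} \leq z_c(\eta_T)$ is incompatible with $\hat\QQ \sim \PP$ once $z$ is sufficiently large, delivering the desired non-attainment. The sharper claim under $\bE[z_c(\eta_0 + W_T)] < \infty$ follows by invoking Proposition \ref{pro:6DD4}, which analyses the relaxed minimizer finely enough to show that, when $z_c(\eta_T)$ is integrable, the singular component of the minimizer is strictly positive at \emph{every} $z > 0$.

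The main technical obstacle, in my view, is the verification of (B1): Remark \ref{rem:B1suf} must deliver exactly the notion of uniqueness of the minimal replicable majorant that is used by Corollary \ref{cor:nonattain}, and one has to check that, in an incomplete market driven by two independent Brownian motions with bounded $\vp$, the trivial constant majorant $-\inf\vp$ is genuinely the essentially unique one. The passage from ``non-attainment for large $z$'' to ``non-attainment for all $z > 0$'' is similarly delicate: it cannot be reduced to the naive estimate $z \leq \bE[z_c(\eta_T)]$, but rather relies on the finer singularity/regularity analysis of Proposition \ref{pro:6DD4}. The remaining steps are straightforward measure-theoretic bookkeeping.
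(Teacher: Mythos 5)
Your proposal is correct and follows the paper's own route exactly: verify assumption (B1) with $\uvp_T = \inf\vp$ via Remark \ref{rem:B1suf} (i.e., independence of $S$ and $\eta_T$ forces every admissible superreplicant of $-\vp(\eta_T)$ to dominate the constant $-\inf\vp$), apply Corollary \ref{cor:nonattain} to get non-attainment for all large $z$, and invoke Proposition \ref{pro:6DD4} — whose hypothesis $\bE[U'(\vp(\eta_0+W_T)-\inf\vp)]<\infty$ is precisely $\bE[z_c(\eta_0+W_T)]<\infty$ — for the all-$z$ refinement. Your parenthetical gloss on how Proposition \ref{pro:6DD4} works (``singular component strictly positive'') is not what its proof actually does (it is a conditioning argument against a filtration enlarged by $\sigma(B)$, culminating in a contradiction with $\mu\ne 0$), but since you only invoke it as a black box this does not affect the correctness of the proposal.
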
 

Proposition \ref{pro:never-mart} shows that, even in the simplest of
     incomplete con\-ti\-nu\-ous-time financial models, the set of countably
     additive martingale measures $\sM$ is not big enough to  host the dual
     optimizer, as soon as the random endowment is unspanned
     (nonreplicable). A suitable relaxation (e.g., to the set of
     finitely-additive martingale measures) is therefore truly needed. 

From \cite{CviSchWan01} we know that the problem \eqref{equ:dual_value2}
always  admits a minimizer $\hat{\Q}$ in the weak$^*$-closure of $\sM$ in
ba$(\P)$. When $\hat{\Q} \notin \sM$ both components in
the Yosida-Hewitt decomposition $\hat{\Q} = \hat{\Q}^r + \hat{\Q}^s$ are
non-trivial. This follows because $V'(0) = -\infty$ forces
$\frac{d\hat{\Q}^r}{d\P} >0$, a.s.  However, closed-form expressions for $\hat{\Q}^r$ and $\hat{\Q}^s$ in the setting of 
     Proposition \ref{pro:never-mart} remain unavailable.

\section{A General Market Model}
\label{sec:General}

We start by describing a general semimartingale financial model which will
serve as the setting for our (abstract) result. It is built on
a filtered probability space
$(\Omega,\sF,\prfo{\sF_t},\P)$ which 
satisfies the usual conditions of right-continuity and completeness; we
assume, in addition, that $\sF_0$ is $\P$-trivial.  The choice of the
constant $1$ as the time horizon is arbitrary; it simply indicates that only
the values of the ingredients in a neighborhood of $0$ are of interest.

\subsection{The asset-price model} \label{sse:model} Let $\prfo{S_t}$ be an
$\prfo{\sF_t}$-adapted, RCLL semimartingale which satisfies the following
assumption (see \cite{DelSch98} for the definition and an in-depth
discussion of the concept of $\sigma$-martingale):
\begin{asmlistA}
\item \label{ite:NFLVR}
The set of  $\sigma$-martingale measures  
\[ \smes := \sets{ \QQ\sim \PP }{ S\text{ is a $\sigma$-
martingale under $\QQ$} } \] is nonempty.
\end{asmlistA}
As shown in \cite{DelSch98} (Theorem 1.1., p.~215), the assumption 
\ref{ite:NFLVR} is equivalent to the 
no-arbitrage condition
NFLVR (see \cite{DelSch98} for the details).

In the context of utility-maximization, it is easier to use a mild modification
$\sM$ of the set $\smes$, which is defined as follows.
Let the \define{admissible set} 
$\sA$ consists of all $\FFF$-predictable
  $S$-integrable processes $\pi$ such that $\int_0^{\cdot} \pi_u\, dS_u$
  is a.s.~uniformly bounded from below. We also define the set of gains
  processes $\sX$ by 
  \[\sX := \Bsets{ \int_0^{\cdot} \pi_u\, dS_u}{\pi\in \sA}.\]
Thanks to the $\sigma$-martingale property, each $X\in\sX$ is a
$\QQ$-local martingale (and therefore a supermartingale) for any $\QQ\in \smes$. Therefore, the set
\[ \sM := \sets{\QQ\sim\PP}{ X\text{ is a } \QQ\text{-supermartingale, for
each } X\in\sX}\]
includes $\smes$. The difference is often not very significant, since
Proposition 4.7, p.~239 in \cite{DelSch98} states that $\smes$ is dense in
$\sM$, in the total-variation norm.

\subsection{The utility function and its dual}
\label{sse:utility}
Let $U$ be a reasonably elastic utility function, i.e., a function
$U:(0,\infty)\to\R$ with the following properties:
\begin{asmlistA}
\item \label{ite:U}
$\left\{\begin{minipage}{0.85\textwidth}
\begin{enumerate}
\item $U$ is strictly concave, $C^1$, and strictly increasing on
$(0,\infty)$,
\item $\lim_{x\downto 0} U'(x) = +\infty$ and $\lim_{x\to\infty}
U'(x)=0$. 
\item \label{ite:AE} $\lim_{x\to\infty} \tfrac{x U'(x)}{U(x)}<1$, if $\sup_{x>0} U(x)>0$. 
\end{enumerate}
\end{minipage} \right.$
\end{asmlistA}
We extend $U$ to the negative axis by $U(x) :=-\infty$ for $x<0$ and $U(0) := \lim_{x\downto 0} U(x)$. $U$'s conjugate (dual utility function) $V:(0,\infty)\to (-\infty,\infty)$ is defined by
        \[ V(z) := \sup_{x\in \R} \Big( U(x) - xz \Big),\quad \efor z> 0.\]
        
\subsection{Value functions} 
  Given a bounded adapted and RCLL process $\prfo{\vp_t}$, 
for $x\in \R$ and $T\in [0,1]$ we set
  \begin{equation}
  \label{equ:def-sU}
    u(T,x) := \sup_{\pi\in\sA}
    \ee{ U \Big( x+\int_0^T \pi_u\, dS_u+\vp_T\Big)},
  \end{equation}
  with the usual convention that
  $\ee{\xi}=-\infty$, as soon as $\ee{\xi^-}=\infty$. 
We call $u$ the \define{(primal) value function} and note that
$u(0,x)=U(x+\vp_0)$, for all $x\in\R$. 
\begin{remark}
\label{rem:ED4}
 As in Section \ref{sec:illustrative}, 
 we use the (slightly nonstandard) notation $T$ for the time-variable to
 stress the fact that in our principal interpretation it plays the role of
 the \emph{time to go}.
 This is also done to avoid the possible confusion with the usual interpertation
 of the parameter $t$ as the current time, with the time-to-go being given
 by $T-t$. We continue using the variable $t$ as the generic ``dummy'' time parameter
 for stochastic processes. 
\end{remark}

For $\QQ\in\sM$, we let  the density process
$\prf{Z^{\QQ}_t}$ be the RCLL version of 
\[ Z^{\QQ}_t := \EE[ \tRN{\QQ}{\PP}|\sF_t],\, t \in [0,1].\]
We also set $\sZ := \sets{Z^{\QQ}}{\QQ\in\sM}$ and define
  \define{dual value function} $v:[0,1]
\times (0,\infty)\to (-\infty,\infty]$ by
 \begin{equation}
 \label{equ:4C6D}
 \begin{split}
   v(T,z) := \inf_{Z\in\sZ} \Bee{ V(zZ_T) + zZ_T \vp_T}.
 \end{split}
 \end{equation}
\begin{remark} \cite{CviSchWan01} show that the conjugate to the primal
value function \eqref{equ:def-sU} equals the expression on the
right-hand-side of \eqref{equ:4C6D} but with $\sZ$ replaced (in a suitable
manner - see Remark \ref{rem:infZ}) by its weak$^*$-closure in ba$(\P)$. 
Theorem
2.10, p.~675 in \cite{LarZit12} shows   that such a relaxation is not
necessary and that infimizing over $\sZ$ yields the same value function.
\end{remark} 
 
In order to have a nontrivial dual problem, we also ask for finiteness of
its value function on its entire domain:
\begin{asmlistA}
\item \label{ite:v-fin}
For all $T\leq 1$ and $z>0$,  we have
$v(T,z) <\infty$.
\end{asmlistA}
\begin{remark}
\label{rem:ae-v}
Thanks to the reasonable asymptotic elasticity condition \eqref{ite:AE} in Assumption \ref{ite:U}
(see \cite[Lemma 6.3, p.~944]{KraSch99} for details) 
the Assumption \ref{ite:v-fin} is equivalent to the existence of
$\QQ\in\sM$ such that $V^+(zZ^{\QQ}_1)\in\lone$, for \emph{some} $z>0$. In
that case, moreover, we have
$V(zZ^{\QQ}_1)\in\lone$ for \emph{all} $z>0$. 
\end{remark}
Remark \ref{rem:ae-v} above and 
the convexity of $V$ guarantee that the set 
\[ \sZ^V(z):=\sets{Z\in\sZ}{\EE[ V^+(zZ^{\QQ}_1)]<\infty},\]
is independent of $z>0$ (so we denote it simply by $\sZ^V$),  
nonempty, and 
enjoys the property that
\[ v(T,z) = \inf_{Z\in\sZ^V} \ee{ V(z Z_T)+ zZ_T \vp_T},
\text{
for all $T\in(0,1]$ and $z>0$.}\]
The set of all corresponding $\QQ\in\sM$ is denoted by $\sM^V$ and its
elements are referred to as \define{$V$-finite}. 
\subsection{The lower-hedging germ price}
\label{sse:lhg}
With $\sS_T$ denoting the set of all $[0,T]$-valued stopping times, we
define the \define{lower American germ price} of $\prfo{\vp}$ by 
  \[ \textstyle \Phi^A:=\lim_{T\downto 0} \Phi^A_T \ewhere \Phi^A_T :=
  \inf_{Z\in\sZ, \tau\in \sS_T} \EE[ Z_{\tau} \vp_{\tau}].\]
The \define{European} counterpart is defined as
  \[ \textstyle \Phi^{E}:=\limsup_{T\downto 0} \Phi^E_T \ewhere \Phi^E_T :=
  \inf_{Z\in\sZ} \EE[ Z_T \vp_{T}].\]
We assume that the two limits are equal:
\begin{asmlistA}
\item \label{ite:Phi} 
$\Phi^A = \Phi^E$,
\end{asmlistA}
and we denote the common value by $\Phi$ and call it the
\define{(lower-hedging) germ price}. 
\begin{remark} \ \label{rem:afterPhi}
\begin{enumerate}
\item Even though Assumption \ref{ite:Phi} is not as standard as,
e.g., \ref{ite:U} or \ref{ite:v-fin},  it is, in fact, quite mild and is 
satisfied in a wide variety of cases. Section \ref{sec:Phi} is
devoted to sufficient conditions and examples related to Assumption \ref{ite:Phi}.
We observe right away, however, that the following three properties follow
directly from it:
\[ \Phi \leq \vp_0,\ \Phi^A_T \nearrow \Phi \eand \Phi^E_T\to \Phi,\text{
  a.s., as } T\downto 0.\]
\item By the density of $\smes$ in $\sM$, referred to in Subsection 
\ref{sse:model}, 
the infima in the
definitions of $\Phi^A$ and $\Phi^E$ can be taken over the smaller set of densities
$Z$ of all $\sigma$-martingale measures $\smes$.
\item
Kabanov and Stricker 
have shown 
(see \cite{KabStr01a}, p.~140, Corollary 1.3)
that the infima in the definitions of lower hedging
prices can be taken over the set of all $V$-finite measures as long as $V$
satisfies a mild growth condition. Such a condition is satisfied in our
case thanks to the assumption of reasonable asymptotic elasticity in
\ref{ite:U}, part \eqref{ite:AE} and Lemma 6.3, p.~944 in \cite{KraSch99}, so
 \begin{equation}
 \label{equ:7094}
 \begin{split}
   \Phi^E_T = \inf_{Z\in\sZ^V}\EE[Z_T\vp_T] \text{ for all } T>0.
 \end{split}
 \end{equation}
This will be useful in the proof of Lemma \ref{lem:4938} below. 
\end{enumerate}
\end{remark}

\subsection{The form of the facelift and the main theorem}
Given two nonnegative constants $\vp,\psi$, let $z\mapsto \uV(z;\vp,\psi)$ denote the 
largest convex function below $z\mapsto V(z)+\vp z $ such that 
$\uV(z;\vp,\psi) - z \psi$ is nonincreasing. This function is given by
  \begin{equation}
  \label{equ:6A9E}
  \begin{split}
      \uV(z; \vp,\psi) = \sup_{x > -\psi} \Big( U(x+\vp) - xz \Big),
  \end{split}
  \end{equation}
  or, equivalently, by
 \begin{equation}
 \label{equ:314C}
 \begin{split}
    \uV(z;\vp,\psi) = \begin{cases}
   V(z) + \vp z,& z\leq z_c\\
   V(z_c) + \vp z_c + \psi (z-z_c), & z>z_c.
   \end{cases}
 \end{split}
 \end{equation}
Here $z_c$ is the (unique) solution to $V'(z_c) + \vp = \psi$ when it
exists, and $z_c=+\infty$, otherwise. 
The special case where $\vp=\vp_0$ and $\psi=\Phi$ 
appears in our main theorem below, so we give it its
own notation 
\begin{equation}
\label{equ:76ED}
\begin{split}
  \uV(z) := \uV(z;\vp_0,\Phi).
\end{split}
\end{equation}
We are now ready to state and prove the central result of this section - it
identifies explicitly the shape of the facelift in both the primal and the
dual problem. The proof is given in Subsection \ref{sse:proof} below.
\begin{theorem}
\label{thm:main}
 Under assumptions \emph{\ref{ite:NFLVR} - \ref{ite:Phi}}, we have
  \begin{equation}
  \label{equ:51E7}
  \begin{split}
     \lim_{T\downto 0} u(T,x) & = \begin{cases} U(x+\vp_0), & x > - \Phi \\
    -\infty,&   x<-\Phi  \end{cases}\\
    \lim_{T\downto 0} v(T,z) &= \uV(z),\text{ for all } z>0.
  \end{split}
  \end{equation}
\end{theorem}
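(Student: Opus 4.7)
Proof outline. My plan is to establish the dual convergence $\lim_{T \downto 0} v(T,z) = \uV(z)$ first and then deduce the primal statement via Legendre conjugacy. The key input beyond standard duality is Assumption \ref{ite:Phi}, which, via the representation \eqref{equ:7094}, couples the germ price to $V$-finite densities and powers both halves of the dual convergence.

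\emph{Dual upper bound.} I prove $\limsup_{T \downto 0} v(T,z) \leq \uV(z)$ by exhibiting explicit $Z^{(T)} \in \sZ^V$ whose objective values converge to $\uV(z)$. For $z \leq z_c$ any fixed $Z \in \sZ^V$ works: the RCLL martingale property gives $Z_T \to 1$ almost surely, and combined with the RCLL property of $\vp$ this yields $V(zZ_T) + zZ_T \vp_T \to V(z) + z\vp_0 = \uV(z)$, with domination provided by the reasonable asymptotic elasticity in \ref{ite:U}. For $z > z_c$ I use \eqref{equ:7094} to select $\hat Z^{(T)} \in \sZ^V$ with $\EE[\hat Z^{(T)}_T \vp_T] \to \Phi$, combine it with a benign $Z^0 \in \sZ^V$ via $Z^{(T)} := \lambda Z^0 + (1-\lambda) \hat Z^{(T)}$ for $\lambda := z_c/z \in (0,1)$, and invoke convexity of $V$ together with linearity of the $zZ\vp_T$ term; the limit then evaluates through \eqref{equ:314C} to the tangent-line value $V(z_c) + z_c \vp_0 + \Phi(z-z_c) = \uV(z)$.

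\emph{Dual lower bound.} The pointwise Fenchel inequality $V(zZ_T) + zZ_T \vp_T \geq U(y+\vp_T) - y z Z_T$, valid whenever $y + \vp_T$ lies in the domain of $U$ and for any $Z \in \sZ^V$, together with $\EE[Z_T] = 1$, yields $v(T,z) \geq \EE[U(y+\vp_T)] - yz$. For $y > -\inf\vp$ this converges to $U(y+\vp_0) - yz$ by dominated convergence. Extending the admissible range of $y$ to all of $(-\Phi, \infty)$---necessary to recover $\uV(z)$ via \eqref{equ:6A9E}---is where Assumption \ref{ite:Phi} enters in earnest: for each $\eps > 0$ and $T$ small, \eqref{equ:7094} supplies an admissible sub-hedge $\pi^T$ with $\Phi^E_T - \eps + \int_0^T \pi^T \, dS \leq \vp_T$, which, inserted into the Fenchel bound and coupled with the supermartingale property of stochastic integrals under $\sM$-measures, yields $\liminf_{T \downto 0} v(T,z) \geq U(y+\vp_0) - yz$ for every $y > -\Phi$; taking the supremum in $y$ produces $\uV(z)$.

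\emph{Primal limit and expected main obstacle.} By \cite{CviSchWan01} and \cite{LarZit12}, $u(T,\cdot)$ is concave on $\R$ and equals the Legendre conjugate of $v(T,\cdot)$. Since each $v(T,\cdot)$ is convex and converges pointwise on $(0,\infty)$ to the convex $\uV$, the convergence automatically upgrades to uniform convergence on compact subsets of $(0,\infty)$, so $\lim_{T \downto 0} u(T,x) = \inf_{z > 0}(\uV(z) + zx)$ by continuity of the Legendre transform. A direct computation from \eqref{equ:314C} identifies this infimum as $U(x+\vp_0)$ for $x > -\Phi$ and as $-\infty$ for $x < -\Phi$, matching \eqref{equ:51E7}. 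The sharpest step I anticipate is the dual lower bound: the naive Fenchel estimate at a fixed $y > -\inf\vp$ only delivers $\sup_{y > -\inf\vp}(U(y+\vp_0) - yz)$, which strictly undershoots $\uV(z)$ whenever $\Phi > \inf\vp$, so the delicate task is to wire the sub-hedge $\pi^T$ from \eqref{equ:7094} into the Fenchel inequality in a way that preserves both the admissibility bound and uniform-in-$Z$ integrability of $V^-(zZ_T)$.
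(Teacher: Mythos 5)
Your overall scaffolding is the same as the paper's: establish $\lim_{T\downto 0} v(T,z)=\uV(z)$ first and then deduce the primal statement by Legendre conjugacy (the paper's Lemma \ref{lem:54E8} does exactly this, citing Theorem 11.34 of \cite{RocWet98}). However, both of your dual bounds have gaps, and the lower bound gap is the serious one.

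\textbf{Dual lower bound.} Your plan is to plug a terminal sub-hedge $\pi^T$, with $\Phi^E_T-\eps + \int_0^T\pi^T\,dS\leq \vp_T$, into Fenchel and pass to the limit. The trouble is that $\int_0^T\pi^T_u\,dS_u$ has no reason to vanish as $T\downto 0$ (the integrand changes with $T$ and may blow up), so after the supermartingale step you are left with $\EE[U(Y+\vp_T)]$ where $Y+\vp_T \geq y$ is all you control; this yields only $\liminf_{T\downto 0}v(T,z)\geq U(y)-zy+z\Phi$, i.e., $\sup_{y'>-\Phi}\bigl(U(y'+\Phi)-zy'\bigr)$, which undershoots $\uV(z)=\sup_{y'>-\Phi}\bigl(U(y'+\vp_0)-zy'\bigr)$ whenever $\vp_0>\Phi$. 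The paper resolves exactly this by replacing your single terminal sub-hedge with a running one: it introduces the Snell-envelope-type process $X_t=\esssup_{Z,\tau}\EE[-\tfrac{Z_\tau}{Z_t}\vp_\tau\mid\sF_t]$, which (by Proposition 4.3 in \cite{Kra96a}) has an RCLL version with $ZX$ a supermartingale for each $Z\in\sZ$, $X_t+\vp_t\geq 0$, and $X_0=-\Phi^A_T$. Fenchel is then applied at time $t$ with shift $x+X_t$, and Fatou plus right-continuity of $X$ and $\vp$ produce the correct limit $U(x+\vp_0-\Phi^A_T)-z(x-\Phi^A_T)$; sending $T\downto 0$ and optimizing over $x>0$ gives $\uV(z)$. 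The RCLL property of the envelope is what makes the Fatou step close, and this is the idea your outline is missing.

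\textbf{Dual upper bound.} Your explicit construction $Z^{(T)}=\lambda Z^0+(1-\lambda)\hat Z^{(T)}$, $\lambda=z_c/z$, is promising, but the step you announce -- ``invoke convexity of $V$ together with linearity of the $zZ\vp_T$ term'' -- does not deliver the tangent-line value: the convexity inequality leaves you with the term $(1-\lambda)\EE[V(z\hat Z^{(T)}_T)]$, which is not controlled (as $\hat Z^{(T)}_T$ can be small, $V$ can be large). What does work is to use monotonicity of $V$: since $V$ is nonincreasing and $zZ^{(T)}_T\geq z_cZ^0_T$, one has $V(zZ^{(T)}_T)\leq V(z_cZ^0_T)$, whose expectation tends to $V(z_c)$, while $\EE[zZ^{(T)}_T\vp_T]\to z_c\vp_0+(z-z_c)\Phi$. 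The paper avoids this altogether by a more structural argument (Lemmas \ref{lem:4938}--\ref{lem:1D74}): it shows $z\mapsto v(T,z)-z\Phi^E_T$ is convex and nonincreasing, that $v(0^+,z)\leq V(z)+z\vp_0$ from the UI-submartingale property of $t\mapsto V(zZ_t)$, and then invokes the definition of $\uV$ as the largest convex function below $V(z)+z\vp_0$ with nonincreasing $\uV(z)-z\Phi$.

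In short: the primal-from-dual step and the high-level plan are right, the upper bound has a fixable slip (convexity should be monotonicity), but the lower bound requires the Snell-envelope construction that your outline does not contain, and without it the argument only yields a strictly weaker bound whenever $\vp_0>\Phi$.
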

\subsection{The true home of Theorem \ref{thm:main}}
\label{sse:markov}
One can argue that the natural home for
our facelifting result of Theorem \ref{thm:main} lies in a class of
interconnected optimization problems in a Markovian setting. Indeed, we would
\emph{not} like to adopt the somewhat unnatural 
interpretation of its result in the sense of the asymptotic behavior of the
dual value function as the time horizon shrinks to $0$ (with the random
endowment somehow depending on it). Rather, we would like to think of the
time as getting closer to the maturity, and the function $v$ as a section
of the entire time-dependent value function, in the spirit of the dynamic
programming principle. 
The way to pass from one framework to the other is rather
simple: when the dynamics of the underlying state process is
homogeneous, one can couple the problems corresponding to the same value of
the state, but with varying times, on the same probability space as
follows.

Let $\fF$ be a nonempty
Hausdorff LCCB (locally-compact with a countable
base), and therefore, Polish,  topological space - Euclidean or
discrete.  For  a nonempty
$G_{\delta}$ (in particular, open or closed) subset $\fS$ of $\R^d$, for some $d\in\N$, 
the product, $\fE=\fS\times \fF$ is Hausdorff LCCB and
Polish.  We work exclusively on the canonical space $\Omega=\de[0,\infty)$ consisting of
all $\fE$-valued RCLL (right-continuous with left limits) paths on
$[0,\infty)$, with the $\sigma$-algebra $\sF$  generated by all coordinate
maps. 
 
The coordinate process is denoted by $\eta$, and its 
components by \begin{enumerate}
\item $S=(S^1,\dots, S^d)$ - $\fS$-valued (modeling a risky
actively-traded asset), and 
\item $F$  - $\fF$-valued (modeling a non-traded factor).
\end{enumerate}
The ``physical'' dynamics of $\eta$ will be described
via a strong Markov family
$\petaeta$ of probability measures on $\de$. 
Let $\FFF^0$ be the (raw) filtration on $\de$, generated by the
coordinate maps, and let 
$\sF^{\eta}_t$ be the $\peta$-completion of $\sF^0_t$. 
Thanks to Blumenthal's $0$-$1$ law, $\FFF^{\eta}$ is right-continuous and 
satisfies the $(\peta)_{\eta\in\fE}$-usual conditions (see \cite[Chapter 3,
$\S$ 3, p.~102]{RevYor99} for details). 

To be able to use Theorem \ref{thm:main} under each $\peta$, we impose the
conditions \ref{ite:NFLVR}-\ref{ite:Phi} on each probability space
$(\Omega,\prfo{\sF^{\eta}_t},\sF,
\peta)$; the sets $\sM^{\eta}$ and $\sZ^{\eta}$ are simply the
$\eta$-parametrized versions of the eponymous objects defined earlier in
this section. Similarly, the admissible set depends on $\eta\in\fE$, and
the family is denoted by $\fam{\sA^{\eta}}{\eta\in\fE}$. 
We work with the utility function (and its dual) which satisfy the
conditions of \ref{ite:U}. 
Given a time-horizon $T\in (0,1]$ and $t\in [0,T]$ we define the primal value function 
\[ u(t,\eta, x) := \sup_{\pi\in\sA^{\eta}} 
\EE^{\eta} \Big[
U\Big( x+ \int_0^{T-t} \pi_u\, dS_u + \vp(\eta_{T-t}) \Big)\Big]\]
where $\vp$ is a bounded and continuous function on $\fE$.  Similarly, 
the dual value function is given by
\[ v(t,\eta,z) := \inf_{Z\in\sZ^{\eta}} 
\EE^{\eta} \Big[ V(zZ_{T-t}) + z Z_{T-t}\vp(\eta_{T-t}) \Big].\]
Under mild additional conditions on $S$ (it will, e.g., suffice that it is
either bounded from below or that its jumps are bounded from below), we
have the following version of the dynamic programming principle (see
Theorem 3.17 in \cite{Zit13}):
 \begin{equation}
 \label{equ:dpp}
 \begin{split}
    v(t,\eta,z) = \inf_{Z\in\sZ^{\eta}} \EE^{\eta}\left[ v(\tau, \eta_{\tau},
   zZ_{\tau})\right],\ v(T,\eta,z) = V(z) + z \vp(\eta), 
 \end{split}
 \end{equation}
for any 
  random time $\tau$ of the form $\tau=t+\sigma$, where $\sigma\in
  [0,T-t]$ is an $\FFF^{\eta}$-stopping time. 
It is also shown in \cite{Zit13} that the function $v$ is (jointly)
universally measurable, so that the expectation on the right-hand side of 
\eqref{equ:dpp} is well-defined. As shown in the last paragraph of
Subsection 3.4.~in \cite{Zit13}, the idea of the proof of Lemma
\ref{lem:54E8} below can be used to establish the dynamic programming principle for
the primal problem, as well. 

Equation \eqref{equ:dpp} 
often serves as an analytic description of the value
function. In continuous time it is usually infinitesimalised into a
PDE and studied, together with its terminal condition, as a nonlinear Cauchy
problem. As already mentioned in Section \ref{sec:illustrative}, in our case a
facelift (boundary-layer) phenomenon appears and this
terminal condition comes in a nonstandard form. Indeed, Theorem
\ref{thm:main} in the present setting becomes:
 \begin{equation}
 \label{equ:7E43}
 \begin{split}
   v(t,\eta,z) \to \underline{V}(z; \vp(\eta), \Phi(\eta)),
 \end{split}
 \end{equation}
where $\Phi(\eta)$ is as in Subsection \ref{sse:lhg}, with the dependence
on $\eta$ emphasized. We conjecture that \eqref{equ:dpp} and \eqref{equ:7E43}
suffice to characterize the value function $v$ in a wide class of models
(possibly via a PDE approach), but do not pursue this interesting question
in the present paper.

\subsection{A proof of Theorem \ref{thm:main}} 
\label{sse:proof}
We split the proof of our
main Theorem \ref{thm:main}
into lemmas and we start from a statement that allows us
to focus completely on the dual problem.
\begin{lemma}
\label{lem:54E8} Under assumptions \ref{ite:NFLVR}-\ref{ite:Phi}, the first
equality in \eqref{equ:51E7} follows from the second one.
\end{lemma}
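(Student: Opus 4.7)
The idea is to derive the primal limit from the assumed dual limit via the duality relation $u(T,x) = \inf_{z>0}\bigl(v(T,z)+xz\bigr)$, which under \ref{ite:NFLVR}-\ref{ite:v-fin} follows from the Cvitanic-Schachermayer-Wang conjugacy between the primal and dual problems together with \cite[Theorem~2.10]{LarZit12} (the $\ba$-relaxation is unnecessary, as already noted in the remark following \eqref{equ:4C6D}). The conjugate of the dual limit itself is explicit: using \eqref{equ:6A9E} (or \eqref{equ:314C}, observing that for $x>-\Phi$ the unique minimizer $z^*:=U'(x+\vp_0)$ lies strictly below $z_c$),
\[
\inf_{z>0}\bigl(\uV(z)+xz\bigr) \;=\; \begin{cases} U(x+\vp_0), & x>-\Phi,\\ -\infty, & x<-\Phi.\end{cases}
\]

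The upper bound on $\lim_T u(T,x)$ then follows at once. For $x>-\Phi$, plugging $z=z^*$ into the conjugacy and invoking the hypothesis gives $\limsup_{T\downto 0} u(T,x) \leq \lim_T v(T,z^*)+xz^* = U(x+\vp_0)$. For $x<-\Phi$, given any $M>0$, pick $z_M$ so large that $\uV(z_M)+xz_M<-M$ (the affine piece of $\uV$ on $[z_c,\infty)$ has slope $\Phi$, so $\uV(z)+xz\to-\infty$ as $z\to\infty$); then $u(T,x)\leq v(T,z_M)+xz_M\to\uV(z_M)+xz_M<-M$, forcing $\lim_T u(T,x)=-\infty$.

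The real obstacle is the matching lower bound $\liminf_{T\downto 0} u(T,x) \geq U(x+\vp_0)$ for $x>-\Phi$, i.e., the interchange of $\lim_T$ and $\inf_z$. Set $f_T(z):=v(T,z)+xz$ and $f(z):=\uV(z)+xz$. The hypothesis gives $f_T\to f$ pointwise on $(0,\infty)$, and convexity of each $f_T$ and of $f$ upgrades this to locally uniform convergence on $(0,\infty)$ (a standard fact from convex analysis). What remains is to trap any sequence of (near-)minimizers $z_T$ of $f_T$ in a fixed compact $[\delta,D]\subset(0,\infty)$ for all sufficiently small $T$. A single Jensen inequality applied to the convex $V$ (with $\ee{Z_T}=1$) yields the universal bound $v(T,z)\geq V(z)+\Phi^E_T z$; combined with $\Phi^E_T\to\Phi$ (Remark~\ref{rem:afterPhi}) and $x+\Phi>0$, this gives $f_T(z)\geq V(z)+cz$ with $c:=\tfrac{1}{2}(x+\Phi)>0$, for all $T$ small. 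The reasonable elasticity assumed in \ref{ite:U} (via \cite[Lemma~6.3]{KraSch99}) then forces $V(z)+cz\to+\infty$ as $z\to\infty$, bounding $z_T$ from above; meanwhile $V(z)\nearrow\sup U$ as $z\to 0$, while $f_T(z^*)\to U(x+\vp_0)<\sup U$, so $z_T$ also remains bounded away from $0$ uniformly for small $T$. Locally uniform convergence on the resulting compact finally gives $u(T,x)=\inf_{z\in[\delta,D]} f_T(z)\to f(z^*)=U(x+\vp_0)$, which closes the argument.
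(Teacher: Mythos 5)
Your proof is correct and takes the same high-level route as the paper: both pass through the conjugacy $u(T,x)=\inf_{z>0}(v(T,z)+xz)$ from \cite{CviSchWan01}/\cite{LarZit12} and then transfer the pointwise limit of $v(T,\cdot)$ through convex conjugation. The difference is in how that transfer is justified. The paper simply cites \cite[Theorem 11.34]{RocWet98} (epi-convergence of convex functions passes to conjugates) and leaves the identification $\uV^{*}(\cdot-\vp_0)$ as a remark; you instead re-derive the needed fragment of that theorem from scratch: locally uniform convergence of the convex $f_T=v(T,\cdot)+x\,\cdot$ to $f=\uV+x\,\cdot$, plus a uniform compactness bound on near-minimizers obtained from the Jensen estimate $v(T,z)\geq V(z)+z\Phi^E_T$. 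This makes the argument self-contained and verifiable without the black box, which is a genuine plus. Two small cleanups: the claim that $V(z)+cz\to+\infty$ as $z\to\infty$ for $c>0$ needs no asymptotic elasticity or \cite[Lemma 6.3]{KraSch99} -- it follows already from the Inada condition $U'(0+)=\infty$, which gives $V'(z)=-\left(U'\right)^{-1}(z)\to 0$, so $V(z)+cz$ is eventually increasing with derivative $\to c$. And in the case $x<-\Phi$ you invoke "the affine piece of $\uV$ on $[z_c,\infty)$", but $z_c=U'(\vp_0-\Phi)$ can be $+\infty$ (precisely when $\Phi=\vp_0$); the conclusion $\uV(z)+xz\to-\infty$ still holds there because $\uV(z)=V(z)+\vp_0 z$ and $V'(z)\to 0$, so the slope of $\uV(z)+xz$ tends to $\vp_0+x<0$, but the stated justification should be amended to cover that degenerate case.
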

\begin{proof}
Suppose that $\lim_{T\downto 0} v(T,z)=\uV(z)$, for all $z>0$.  
The conjugate relationship between the primal and the dual value functions
\[ u(T,x) = \inf_{z>0} \Big( v(T,z) + xz \Big) \quad \efor\quad  x\in\R,\quad  \quad T\in(0,1],\]
established in \cite{CviSchWan01} and further extended in
\cite{LarZit12}, allows us to 
apply the tools of classical convex analysis.
Indeed, the assumed pointwise convergence of the function $v$ transfers
directly to the convex conjugate in the interior of its effective domain (see Theorem 11.34, p.~500 in
\cite{RocWet98}). One only needs to check that the limiting function for
the primal value function in \eqref{equ:51E7} and the function $\uV$ are
convex conjugates of each other.
\end{proof}
We focus now exclusively on the dual problem and examine the asymptotic
behavior of the function $v$ in the large-$z$ regime:
\begin{lemma}
\label{lem:4938} Under assumptions \ref{ite:NFLVR}-\ref{ite:v-fin}, for all $T\in (0,1]$ the function $z\to v(T,z)$ is convex and 
 $$\lim_{z\to\infty} \oo{z} v(T,z) = \Phi^E_T.$$
\end{lemma}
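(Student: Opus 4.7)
For \emph{convexity} of $z \mapsto v(T,z)$, I would exploit the convexity of $\sZ$ (which follows because a convex combination of equivalent supermartingale densities is again one) combined with the convexity of $V$. Given $z_1,z_2>0$, $\alpha\in[0,1]$ and $z:=\alpha z_1+(1-\alpha)z_2$, for any $\eps>0$ pick near-optimal $Z_1,Z_2\in\sZ$ for $v(T,z_1),v(T,z_2)$, and set $\lambda:=\alpha z_1/z$, $Z:=\lambda Z_1+(1-\lambda)Z_2\in\sZ$. The key observation is that the reparametrization makes $zZ_T=\alpha(z_1 Z_{1,T})+(1-\alpha)(z_2 Z_{2,T})$, so that the convexity of $V$ applies with weights summing to $1$:
\[ V(zZ_T)\le \alpha V(z_1 Z_{1,T})+(1-\alpha)V(z_2 Z_{2,T}),\]
and the linear term $zZ_T\vp_T$ splits analogously. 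Taking expectations and sending $\eps\downto 0$ yields $v(T,z)\le \alpha v(T,z_1)+(1-\alpha)v(T,z_2)$.

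For the \emph{lower bound} $\liminf_{z\to\infty} v(T,z)/z \geq \Phi^E_T$, the Fenchel inequality $V(y)\ge U(x)-xy$ (valid for any $x>0$, $y>0$) gives, for every $Z\in\sZ$,
\[ \EE\bigl[V(zZ_T)+zZ_T\vp_T\bigr]\ge U(x)+z\bigl(\EE[Z_T\vp_T]-x\bigr).\]
Infimizing over $Z\in\sZ$ yields $v(T,z)\ge U(x)+z(\Phi^E_T-x)$, so $\liminf_{z\to\infty} v(T,z)/z\ge \Phi^E_T-x$; sending $x\downto 0$ (using that $U(x)$ is finite for every $x>0$ by Assumption \ref{ite:U}) finishes this direction.

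For the \emph{upper bound} $\limsup_{z\to\infty} v(T,z)/z\le \Phi^E_T$, I would work with $Z\in\sZ^V$. Since $U$ is strictly increasing we have $V'=-(U')^{-1}<0$, so $V$ (and hence $V^+$) is decreasing on $(0,\infty)$. Consequently, for $z\ge z_0$,
\[ V(zZ_T)\le V^+(zZ_T)\le V^+(z_0Z_T),\]
and by Remark \ref{rem:ae-v} the latter is integrable. Hence $\EE[V(zZ_T)]$ is bounded above uniformly in $z\ge z_0$, so $\limsup_{z\to\infty}\EE[V(zZ_T)]/z\le 0$, giving
\[ \limsup_{z\to\infty} v(T,z)/z\le \limsup_{z\to\infty}\tfrac{1}{z}\EE\bigl[V(zZ_T)+zZ_T\vp_T\bigr]\le \EE[Z_T\vp_T].\]
Infimizing over $Z\in\sZ^V$ and invoking \eqref{equ:7094} of Remark \ref{rem:afterPhi}(3), which identifies $\inf_{Z\in\sZ^V}\EE[Z_T\vp_T]$ with $\Phi^E_T$, completes the proof.

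The only delicate point, and the reason \eqref{equ:7094} is invoked, is the mismatch between the set $\sZ$ used in the definition of $\Phi^E_T$ and the set $\sZ^V$ forced upon us in the upper bound by the requirement that the dual objective be finite; without the reasonable asymptotic elasticity assumption in \ref{ite:U}, which lies behind \eqref{equ:7094}, this step would not close. The convexity argument, though routine, also hinges on the non-obvious reparametrization that turns the product $zZ$ into a genuine convex combination.
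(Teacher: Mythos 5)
Your proof is correct and follows essentially the same strategy as the paper: the reparametrization $\lambda = \alpha z_1/z$ for convexity, the Fenchel bound $V(y)\geq U(x)-xy$ with $x\downto 0$ for the lower bound, and control of $\EE[V^+(zZ_T)]$ over $Z\in\sZ^V$ together with \eqref{equ:7094} for the upper bound. The only (harmless) divergence is in the upper bound, where you replace the paper's appeal to monotone convergence by the uniform estimate $\EE[V(zZ_T)]\leq\EE[V^+(z_0 Z_T)]<\infty$ for $z\geq z_0$ — which is, if anything, a cleaner way to get $\limsup_z \frac{1}{z}\EE[V(zZ_T)]\leq 0$.
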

\begin{proof} 
Convexity of $v(T,\cdot)$ follows from the convexity
of $V(T,\cdot)$ and $\sZ$. For the second statement, we fix $T\in(0,1]$,
pick an arbitrary $\eps>0$, and note that, for all $Z\in\sZ$, we have
 \begin{equation}
 \label{equ:6F7}
 \begin{split}
    \oo{z} \Bee{ V(z Z_T)+z Z_T \vp_T}
    \geq \oo{z}U(\eps)  + \EE[ Z_T (\vp_T-\eps)]
 \end{split}
 \end{equation}
 Passing to an infimum over all $Z\in\sZ^V$, and using the result in
 \eqref{equ:7094}, we get
\[ \tfrac{1}{z} v(T,z) \geq \oo{z} U(\eps) +  
\Phi^E_T - \eps,\text{ and so }
 \liminf_{z\to\infty} \oo{z} v(T,z) \geq  \Phi^E_T.\]
On the other hand, by the monotone convergence 
theorem we have
\[ \lim_{z\to\infty} \oo{z} \ee{ V(zZ_T)} = 0 \efor Z\in\sZ^V.\]
Therefore, for $Z\in\sZ$, we have
\begin{equation*}
\label{equ:344A}
\begin{split}
  \limsup_{z\to\infty} \oo{z} v(T,z) &\leq 
  \limsup_{z\to\infty} \oo{z} \ee{ V(zZ_T)} + \ee{ Z_T \vp_T} = 
  \EE[ Z_T \vp_T].
\end{split}
\end{equation*}
To complete the proof, it suffices to infimize over all $Z\in\sZ^V$.
\end{proof}
We define $v(0_+,z) := \liminf_{T\downto 0} v(T,z)$ and $v(0^+,z) := 
\limsup_{T\downto 0} v(T,z)$. 

\begin{lemma}
\label{lem:1D74} Under assumptions \ref{ite:NFLVR}-\ref{ite:Phi} 
$v(0^+,z)\leq \uV(z)$ for all $z>0$.
\end{lemma}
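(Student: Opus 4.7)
I would prove $v(0^+,z) \leq \uV(z)$ by exhibiting, for each $z>0$ and each small $T>0$, an explicit $Z^T \in \sZ^V$ with $\EE[V(zZ^T_T) + zZ^T_T \vp_T] \to \uV(z)$ as $T \downto 0$. The piecewise formula \eqref{equ:314C} splits the argument at the critical value $z_c$.

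For $z \leq z_c$, where $\uV(z) = V(z) + z\vp_0$, I would simply take $Z^T = \bar{Z}$ for some fixed $\bar{Z} \in \sZ^V$ (nonempty by \ref{ite:v-fin}). Right-continuity of $\bar{Z}$ at $0$ gives $\bar{Z}_T \to 1$ a.s., and $\vp_T \to \vp_0$ a.s. Uniform integrability is obtained by bounding $V^+(z\bar{Z}_T) \leq \EE[V^+(z\bar{Z}_1)\mid \sF_T]$ via Jensen's inequality (the RHS is UI since $\bar{Z} \in \sZ^V$) and $V^-(z\bar{Z}_T) \leq z\bar{Z}_T + |U(1)|$ via the Fenchel bound $V(w) \geq U(1) - w$. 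Passing to the limit inside the expectation yields $\EE[V(z\bar{Z}_T) + z\bar{Z}_T \vp_T] \to V(z) + z\vp_0 = \uV(z)$.

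For $z > z_c$, where $\uV(z) = V(z_c) + z_c\vp_0 + \Phi(z - z_c)$, I would set $\theta := z_c/z \in (0,1)$ and consider
\[
Z^T := \theta\bar{Z} + (1-\theta)\hat{Z}^T,
\]
with $\bar{Z} \in \sZ^V$ fixed and $\hat{Z}^T \in \sZ^V$ chosen so that $\EE[\hat{Z}^T_T \vp_T] \leq \Phi^E_T + T$, available by \eqref{equ:7094}. Convex combinations of $\sZ^V$-elements remain in $\sZ^V$ (the supermartingale property passes to convex combinations of measures, and convexity of $V^+$ preserves the integrability condition). The pivotal simplification is that
\[
zZ^T_T \;=\; z_c\bar{Z}_T + (z-z_c)\hat{Z}^T_T \;\geq\; z_c\bar{Z}_T,
\]
since $\hat{Z}^T_T \geq 0$, so strict monotonicity of $V$ on $(0,\infty)$ (as $V'<0$ there) gives the pointwise bound $V(zZ^T_T) \leq V(z_c\bar{Z}_T)$. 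Hence
\[
\EE[V(zZ^T_T) + zZ^T_T\vp_T] \;\leq\; \EE[V(z_c\bar{Z}_T)] + z_c\EE[\bar{Z}_T\vp_T] + (z-z_c)\EE[\hat{Z}^T_T \vp_T].
\]
The UI argument of the first case (now with $z_c$ in place of $z$) gives $\EE[V(z_c\bar{Z}_T)] \to V(z_c)$ and $\EE[\bar{Z}_T\vp_T] \to \vp_0$, and $\EE[\hat{Z}^T_T\vp_T] \to \Phi$ by construction and Assumption \ref{ite:Phi}. Summing delivers the required $\limsup_{T \downto 0} v(T,z) \leq V(z_c) + z_c\vp_0 + (z-z_c)\Phi = \uV(z)$.

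The whole proof hinges on the choice $\theta = z_c/z$, which places the ``reference'' part of $Z^T$ at exactly the critical level $z_c$ where $\uV$ kinks, so that the non-negativity of the germ-price piece $(z-z_c)\hat{Z}^T_T$ is absorbed harmlessly by the monotonicity of $V$. This neatly side-steps the need to control the potentially unbounded quantity $\EE[V(z\hat{Z}^T_T)]$, which is where the dual singularity would otherwise hide; that is the only non-routine step, the remainder being convergence-inside-expectation bookkeeping.
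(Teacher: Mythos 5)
Your proof is correct, but it takes a genuinely different route from the paper's. The paper proceeds abstractly: it uses Lemma~\ref{lem:4938} to establish that $z \mapsto v(T,z) - z\Phi^E_T$ is convex and nonincreasing, passes these structural properties to the limsup to conclude that $z\mapsto v(0^+,z) - z\Phi$ is convex and nonincreasing, then bounds $v(0^+,z) \leq V(z) + z\vp_0$ via the uniformly integrable submartingale $t \mapsto V(zZ_t)$, and finally invokes the defining property of $\uV$ as the \emph{largest} convex function below $V(z)+z\vp_0$ with $\uV(z) - z\Phi$ nonincreasing. You instead construct, for each $z$, a nearly optimal competitor $Z^T$ directly: a fixed reference $\bar Z$ for $z\leq z_c$, and for $z > z_c$ the convex combination $Z^T = (z_c/z)\bar Z + (1 - z_c/z)\hat Z^T$ with $\hat Z^T$ a near-minimizer of $\EE[\hat Z^T_T\vp_T]$. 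The calibration $\theta = z_c/z$ is exactly what makes the trick work, since $zZ^T_T \geq z_c\bar Z_T$ lets the monotonicity of $V$ absorb the germ-price piece without having to estimate $\EE[V(z\hat Z^T_T)]$. Your approach is longer and requires checking closure of $\sZ^V$ under convex combinations (which indeed holds) and invoking \eqref{equ:7094}, but it is more constructive and reveals which asymptotically optimal strategies produce the facelift, whereas the paper's argument is shorter, handles the case split implicitly through the variational characterization of $\uV$, and reuses Lemma~\ref{lem:4938} rather than re-deriving the boundary behavior.
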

\begin{proof}
By Lemma \ref{lem:4938} above the function $z\mapsto v(T,z) - z \Phi^E_T $ is
convex and nonincreasing, for all $T\in(0,1]$. Therefore, so is the function $z\mapsto
v(0^+,z) - z\Phi$. Indeed, $\Phi = \lim_{T\downto 0} \Phi^E_T$ and both 
convexity and the nonincreasing property are preserved by
the limit superior operator. On the other hand, for $z>0$ and $Z\in\sZ^V$, the process
\[ t\mapsto V(z Z_t), \quad t\in [0,1],\]
is a uniformly integrable RCLL submartingale. Therefore, $\EE[V(zZ_T)] \to V(z)$,
as $T\to 0$. Since $\sZ^V$ is nonempty, we have
  \begin{equation*}
  \label{equ:3A92}
  \begin{split}
     v(0^+,z) &\leq \textstyle \limsup_{T\downto 0} \EE[ V(zZ_T)+zZ_T
     \vp_T] = V(z) + z \vp_0.
  \end{split}
  \end{equation*}
It remains to use the definition of $\uV$.
\end{proof}

\begin{lemma} \label{lem:2299} Under assumptions \ref{ite:NFLVR}-\ref{ite:Phi} 
$v(0_+,z)\geq \uV(z)$ for all $z>0$.
\end{lemma}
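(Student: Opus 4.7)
The plan is to prove $v(0_+,z)\geq\uV(z)$ by exploiting the representation \eqref{equ:6A9E}, namely $\uV(z) = \sup_{x > -\Phi}(U(x+\vp_0)-xz)$.  It suffices to establish, for each fixed $x > -\Phi$, the pointwise lower bound $v(0_+,z)\geq U(x+\vp_0)-xz$; the conclusion then follows by taking the supremum over admissible $x$.

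As a starting point, fix $x > -\Phi$.  Since $\Phi\leq\vp_0$ (by the remark after Assumption \ref{ite:Phi}) we have $x+\vp_0 > 0$, so Young's inequality $V(y)\geq U(x+\vp_0)-(x+\vp_0)y$ is valid for every $y > 0$.  Applying it pointwise with $y=zZ_T$ for an arbitrary $Z\in\sZ^V$, taking expectations and using the martingale property $\EE[Z_T]=1$, we obtain
\[
\EE\bigl[V(zZ_T)+zZ_T\vp_T\bigr] \;\geq\; U(x+\vp_0)-xz + z\bigl(\EE[Z_T\vp_T]-\vp_0\bigr).
\]
Passing to the infimum over $Z$ and then to $\liminf_{T\downto 0}$ would close the argument provided the correction $z(\EE[Z_T\vp_T]-\vp_0)$ vanishes uniformly in $Z$.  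It does not:  for fixed $Z$ the RCLL property of $\vp$ plus bounded convergence gives $\EE[Z_T\vp_T]=\EE^{\QQ}[\vp_T]\to\vp_0$, but $\inf_{Z\in\sZ^V}\EE[Z_T\vp_T]=\Phi^E_T\to\Phi$, which is strictly less than $\vp_0$ in the facelift regime.

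The actual argument must therefore use Assumption~\ref{ite:Phi} non-trivially.  The strategy is to introduce, for each $Z$, a stopping time $\tau^Z\in\sS_T$ that nearly realises the American germ-price infimum $\Phi^A_T$, and to invoke the $\PP$-submartingale property of $V(zZ_\cdot)$ in order to write
\[
\EE[V(zZ_T)]\geq \EE[V(zZ_{\tau^Z})],\qquad \EE[Z_T\vp_T]=\EE[Z_{\tau^Z}\vp_{\tau^Z}]+\EE^{\QQ^Z}[\vp_T-\vp_{\tau^Z}],
\]
and then apply Young's inequality at $\tau^Z$ in place of $T$.  The residual term $\EE^{\QQ^Z}[\vp_T-\vp_{\tau^Z}]$ vanishes for each fixed $\QQ^Z$ by bounded convergence, while Assumption~\ref{ite:Phi} (that $\Phi^A_T\nearrow\Phi$) and the monotonicity of $v(0_+,z)-z\Phi$ inherited from Lemma \ref{lem:4938} control the remaining piece uniformly enough to deliver the required pointwise bound.

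The main obstacle I anticipate is precisely this non-uniformity:  the crude European bound $\inf_Z\EE[Z_T\vp_T]\geq\Phi^E_T$ alone yields only $v(0_+,z)\geq V(z)+z\Phi$ after supremizing over $x$, which equals $\uV(z)$ only when $\vp_0=\Phi$, i.e.\ when no facelift occurs.  The gap of size $z(\vp_0-\Phi)$ on $(0,z_c]$ must be closed by converting the pointwise-in-$Z$ statement $\EE[Z_T\vp_T]\to\vp_0$ into an effective uniform one, and this is where the American-germ-price equality of Assumption~\ref{ite:Phi} — via a careful choice of $\tau^Z$ — becomes indispensable.
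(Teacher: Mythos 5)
Your first two paragraphs correctly identify the reduction (it suffices to bound $v(0_+,z)$ below by $U(x+\vp_0)-xz$ for each $x>-\Phi$ and take a supremum) and correctly diagnose why the naive European bound only delivers $V(z)+z\Phi$: after taking the infimum over $Z$ the term $\inf_Z\EE[Z_T\vp_T]=\Phi^E_T$ converges to $\Phi$, not to $\vp_0$. That analysis is accurate. However, the fix you sketch in the third paragraph — apply Young's inequality at a near-optimizing stopping time $\tau^Z$ and combine with the submartingale property of $V(zZ_\cdot)$ — has a genuine gap: Young's inequality at $\tau^Z$ requires the argument $x+\vp_{\tau^Z}$ to be positive a.s., but for $x>-\Phi$ with $\Phi>\inf\vp$ nothing forces $\vp_{\tau^Z}>-x$, so $U(x+\vp_{\tau^Z})$ may equal $-\infty$ on a non-null set and the inequality becomes vacuous. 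Moreover $\Phi^A_T$ is an infimum over pairs $(Z,\tau)$, so for a fixed $Z$ there is no reason a single stopping time $\tau^Z$ should produce $\EE[Z_{\tau^Z}\vp_{\tau^Z}]\approx\Phi^A_T$; the final "controls the remaining piece uniformly enough" is where the argument is supposed to happen and nothing is actually proved there.

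The paper closes exactly this gap by introducing an auxiliary object you never construct: the RCLL Kramkov-type superhedging process $X_t=\esssup_{Z\in\sZ,\tau\in[t,T]}\EE[-\tfrac{Z_\tau}{Z_t}\vp_\tau\,|\,\sF_t]$, for which $ZX$ is a $\PP$-supermartingale for every $Z\in\sZ$, $X_t+\vp_t\geq 0$ for all $t\leq T$, and $X_0=-\Phi^A_T$. Young's inequality is then applied to the a.s.\ positive quantity $x+X_t+\vp_t$ (for $x>0$), and the supermartingale property of $ZX$ replaces the ad hoc stopping-time manipulation: $-z\EE[Z_tX_t]\geq -zX_0=z\Phi^A_T$ uniformly in $Z$. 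One then lets $t\downto 0$ by Fatou (using right-continuity of $X$ and $\vp$), lets $T\downto 0$ so that $\Phi^A_T\nearrow\Phi$, and maximizes over $x>0$, which after the change of variable $y=x-\Phi$ reproduces exactly $\sup_{y>-\Phi}\bigl(U(y+\vp_0)-yz\bigr)=\uV(z)$. In short: your intuition about needing Assumption \ref{ite:Phi} via the American germ price is on target, but without a process like $X$ to simultaneously guarantee positivity of the Young argument and a $Z$-uniform bound on $\EE[Z_tX_t]$, the sketch does not close.
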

\begin{proof} 
For $T\in(0,1]$ and $t\in [0,1]$, we set 
\[ X_t = \esssup_{Z\in\sZ,\tau\in [t,T]}\EE[ -\tfrac{Z_T}{Z_t}
\vp_{\tau}|\sF_t] \efor t\leq T \eand X_t = X_T \efor t>T.\]
By Proposition 4.3, p.~467 in \cite{Kra96a}, 
$\prfo{X_t}$ admits a RCLL version
and the process $\prfo{Z_tX_t}$ is a
supermartingale for each $Z\in\sZ$. 
Also, we have
  $X_t+\vp_t \geq 0$, for all $t\leq T$ and
   $X_0  = - \Phi^A_T$. For $x>0$, Fenchel's inequality produces
\begin{equation*}
\label{equ:4DA2}
\begin{split}
  V(z Z_t) + zZ_t \vp_t \geq U( x +X_t+\vp_t)-zZ_t ( x +X_t), \as, \quad Z\in\sZ.
\end{split}
\end{equation*}
By taking expectation through we find
\begin{align*}
 \E[V(zZ_t)+zZ_t \vp_t] &\geq \EE[ U( x + X_t  + \vp_t)]  - zx -z\EE[ Z_t  X_t]\\ &
\geq \EE[ U( x + X_t  + \vp_t)]  - zx -zX_0\\ &
  = \EE[ U(  x  + X_t + \vp_t)] -zx + z \Phi^A_t.
\end{align*}
where the second inequality follows from the supermartingale property of $ZX$. Since $x>0$ we can use Fatou's Lemma to see 
\begin{equation*}
\label{equ:5219}
\begin{split}
v(0+,z) &\geq \liminf_{t\downto 0} \EE[ U( x   + X_t + \vp_t)] - z(x-\Phi)\\
&\geq \EE[ U( x  +\liminf_{t\downto 0} (X_t + \vp_t))] - z ( x -\Phi)\\ &= 
 U( x  + \vp_0 - \Phi^A_T) - z ( x  -\Phi),
\end{split}
\end{equation*}
where the last equality follows front the right-continuity of $X$'s and $\varphi$'s paths. It remains to let $T\downto 0$ and then maximize over all $x>0$. 
\end{proof}

\section{A modified objective}
\label{sec:mod}
Our next result states that the seemingly 
local effect of a facelift is sometimes felt far away from it, as well. We adopt the
setting of Section \ref{sec:General}, with assumptions
\ref{ite:NFLVR}-\ref{ite:v-fin} in place, but do not assume \ref{ite:Phi}. Since the results in this section are not asymptotic in
nature, we chose and fix a time horizon $T>0$ and replace the time-set
$[0,1]$ from Section \ref{sec:General} by the generic $[0,T]$.

As a preparation for our result on the modified objective we define the set
\[ \sC:=\Bsets{ x+\textstyle\int_0^T \pi_u\, dS_u}{x\in\R, \pi\in\sA}.\]
The following property for the variable $\vp_T$ will be crucial in the
sequel:
\begin{asmlistB}
 \item There exists a random variable $\uvp_T\in\sC$ such that
 $X+\uvp_T\geq 0$, a.s., whenever
 $X\in \sC$ and  $X+\vp_T\geq 0$, a.s. 
 \label{ite:smalfac}
\end{asmlistB}
\begin{remark}
\label{rem:B1suf}
One can construct one-period examples on a three-element probability
space where \ref{ite:smalfac} fails. There are, nevertheless, plenty of
cases when it always holds. 
 For example, in \cite{BroCviSon98}, 
\ref{ite:smalfac} is shown to hold in a related problem.
In particular in the setting of Section \ref{sec:illustrative}, we have 
$$
x+ \int_0^t \pi_u dS_u \ge \E^\QQ\left[ x+\int_0^T\pi_u dS_u|\sF_t\right]\ge -\E^\QQ[\varphi(\eta_0 + W_T) |\sF_t].
$$
By optimizing over $\Q\in\sM$ we then find
$$
x+ \int_0^t \pi_u dS_u \ge -\inf\vp.
$$
Consequently, in the setting of Section \ref{sec:illustrative}, we have $\uvp_T=\inf \vp$. 

\end{remark}
\begin{theorem} 
\label{thm:main2}
Suppose that assumptions \ref{ite:NFLVR}-\ref{ite:v-fin} and
\ref{ite:smalfac} hold.
Then, 
 for all $z \in  (0,\infty)$ and $T\in(0,1]$ we have the representation
  \begin{equation}
  \label{equ:5375}
  \begin{split}
      v(T,z) = \inf_{Z\in\sZ} \EE[ \uV(zZ_T; \vp_T,\uvp_T)].
  \end{split}
  \end{equation}
\end{theorem}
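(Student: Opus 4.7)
The equality \eqref{equ:5375} splits into two inequalities: an immediate one, and a more substantive one that uses the primal problem and \ref{ite:smalfac}.

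For $\inf_{Z\in\sZ}\EE[\uV(zZ_T;\vp_T,\uvp_T)] \leq v(T,z)$, the definition \eqref{equ:6A9E} gives the pointwise bound $\uV(z;\vp,\psi) \leq \sup_{x\in\R}(U(x+\vp)-xz) = V(z) + z\vp$. Applying this with $(z,\vp,\psi) = (zZ_T,\vp_T,\uvp_T)$, taking expectations, and infimizing over $\sZ$ yields the inequality.

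For the reverse bound, I would fix $Z\in\sZ$ and pass through the primal value function. Take any $\pi\in\sA$ and set $X := x + \int_0^T \pi_u\,dS_u$. If $X + \vp_T \geq 0$ a.s.\ (no loss, since otherwise the primal objective $\EE[U(X+\vp_T)]$ is $-\infty$), then $X\in\sC$, and \ref{ite:smalfac} forces $X + \uvp_T \geq 0$ a.s., i.e., $X \geq -\uvp_T$ a.s. Applying the Fenchel-type inequality implicit in \eqref{equ:6A9E} pointwise (with the boundary case $X = -\uvp_T$ absorbed by the upper semicontinuity of $U$) yields
\[
 U(X + \vp_T) \leq \uV(zZ_T;\vp_T,\uvp_T) + zZ_T\, X \quad\text{a.s.}
\]
Taking expectations and invoking the supermartingale budget bound $\EE[Z_T X] \leq x$, valid because $Z\in\sM$ makes all gains processes supermartingales, gives $\EE[U(X+\vp_T)] \leq \EE[\uV(zZ_T;\vp_T,\uvp_T)] + zx$. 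Supping over $\pi\in\sA$ yields $u(T,x) - xz \leq \EE[\uV(zZ_T;\vp_T,\uvp_T)]$ for every $x\in\R$ (trivially where $u(T,x) = -\infty$). The convex conjugacy $v(T,z) = \sup_{x\in\R}(u(T,x) - xz)$ delivered by the duality theory of \cite{CviSchWan01} and \cite{LarZit12}, together with infimizing over $Z\in\sZ$, then closes the argument.

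The key step is the modified Fenchel inequality, whose validity rests squarely on \ref{ite:smalfac}: without it, there is no mechanism to upgrade $X + \vp_T \geq 0$ into the stronger constraint $X \geq -\uvp_T$ required by the restricted supremum defining $\uV$, and one would only recover the weaker unmodified bound $V(zZ_T) + zZ_T\vp_T$.
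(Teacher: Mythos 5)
Your proof is correct and follows essentially the same route as the paper's: establish the trivial inequality from $\uV(\cdot;\vp,\psi)\leq V(\cdot)+\vp\cdot$, then use \ref{ite:smalfac} to upgrade the a.s.\ constraint $x+\int_0^T\pi_u\,dS_u\geq-\uvp_T$, apply the modified Fenchel inequality encoded in $\uV$, invoke the supermartingale budget bound, and close via the conjugacy $u$--$v$ from \cite{CviSchWan01,LarZit12}. The paper organizes the same ideas through an intermediate function $\underline{U}(\cdot;\uvp)$ and then passes to its conjugate, but the content is identical to what you wrote.
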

\begin{proof}
 Let $(T,z)\mapsto \uv(T,z)$ 
 denote the function defined by the right-hand side of
\eqref{equ:5375}. Since $\uV(z; \vp_T,\uvp_T)\leq V(z) + z \vp_T$, for all $z>0$,
a.s., we clearly have $\uv\leq v$. To prove
the converse inequality, we pick
$x\in\R$ and  $\pi\in\sA$,  with $\EE[ U(x+
\ips +\vp_T)]>-\infty$. That implies
\[ x+\ips+\vp_T \geq 0, \as\]
Therefore, there exists $\uvp_T\in\sC$ such that
\[ x+\ips+\uvp_T \geq 0.\]
This produces 
$$
\EE[ U(x+\ips+\vp_T)] = \EE[ \underline{U}\big(x+\ips+ \vp_T; \uvp_T\big)],
$$ 
where we have introduced
\[ 
  \underline{U}(x;\uvp) := \begin{cases} U(x), & x > -
  \uvp,\\
-\infty, & \text{ otherwise.}\end{cases}
\]
We have
 \begin{equation*}
 \begin{split}
   \sup_{x\in\R} \Big( \underline{U}(x+\vp) - xz \Big) &=
     \sup_{x> -\uvp} \Big( U(x+\vp) - xz \Big) 
    = \uV(z; \vp,\uvp),
 \end{split}
 \end{equation*}
which, together with the supermartingale property of $Z_t (x+\int_0^t \pi_u
dS_u)$ for each  $Z\in\sZ$ produces
 \begin{equation*}
 \label{equ:3B93}
 \begin{split}
    \EE[ U(x+\ips + \vp_T)] - xz\leq \EE[ \underline{V}(zZ_T;\vp_T,\uvp_T)
   ] ,
 \end{split}
 \end{equation*}
for all $z>0$. This,  in turn, implies that
\[ u(T,x) - xz \leq \uv(T,z).\]
The claim now follows by using the conjugacy of the primal and dual value
functions, as established in \cite{CviSchWan01} and extended in
\cite{LarZit12} (see Remark \ref{rem:infZ} above for  details.) 
\end{proof}
The result of Theorem \ref{thm:main2} has an interesting consequence:
\begin{corollary}
\label{cor:nonattain}
 Suppose that the conditions of Theorem \ref{thm:main2} hold and that $
 \PP[ \uvp_T \ne \vp_T]>0$. Then, the dual problem 
 \eqref{equ:4C6D} \emph{does not} admit a minimizer $Z\in\sZ$, for all
 $z>0$, large enough. 
\end{corollary}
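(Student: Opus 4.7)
The plan is to argue by contradiction. Assume \eqref{equ:4C6D} admits a minimizer $\hat{Z}\in\sZ$ at some $z>0$, and derive a contradiction with the hypothesis $\PP[\uvp_T\ne\vp_T]>0$. The first move is to sandwich the optimality condition using both representations of $v(T,z)$. By assumption $v(T,z)=\EE[V(z\hat{Z}_T)+z\hat{Z}_T\vp_T]$, while Theorem \ref{thm:main2} yields $v(T,z)=\inf_{Z\in\sZ}\EE[\uV(zZ_T;\vp_T,\uvp_T)]\le\EE[\uV(z\hat{Z}_T;\vp_T,\uvp_T)]$. Combined with the universal pointwise inequality $\uV(y;\vp,\uvp)\le V(y)+\vp y$, these estimates must collapse into equalities, so that
\[ \uV(z\hat{Z}_T;\vp_T,\uvp_T) = V(z\hat{Z}_T) + z\hat{Z}_T\vp_T, \quad \PP\text{-a.s.} \]

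The second step converts this identity into a pointwise bound on $\hat{Z}_T$. The piecewise formula \eqref{equ:314C} shows that the equality above holds precisely when $z\hat{Z}_T\le z_c:=U'(\vp_T-\uvp_T)$, adopting the convention $U'(0)=+\infty$ on the set $\{\vp_T=\uvp_T\}$. We therefore extract
\[ z\hat{Z}_T\le U'(\vp_T-\uvp_T),\quad\text{equivalently,}\quad I(z\hat{Z}_T)\ge \vp_T - \uvp_T, \]
where $I=(U')^{-1}$. On the non-null event $A:=\{\vp_T>\uvp_T\}$ this is a genuine upper bound, and truncating to $B:=A\cap\{U'(\vp_T-\uvp_T)\le M\}$ for $M$ large enough that $\PP(B)>0$ gives the quantitative estimate $\hat{\QQ}(B)=\EE[\hat{Z}_T\ind B]\le M\PP(B)/z$, which vanishes as $z\to\infty$.

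The genuinely hard step is converting this quantitative smallness into a contradiction for $z$ above an explicit threshold. My plan is to exploit the primal-dual conjugacy of \cite{CviSchWan01,LarZit12}: the inequality $I(z\hat{Z}_T)\ge\vp_T-\uvp_T$ is equivalent, via the Fenchel identification $\hat{X}_T=I(z\hat{Z}_T)-\vp_T$ of the primal optimal wealth, to the admissibility strengthening $\hat{X}_T\ge-\uvp_T$ a.s. Combining this with the asymptotic slope identification $v'(T,z)\to\Phi$ from Lemma \ref{lem:4938} and the strict superhedging gap encoded in $\PP(B)>0$, one anticipates, for $z$ past an explicit threshold, a direct conflict between the subgradient of $v(T,\cdot)$ at $\hat{Z}$ and the lower-hedging value $\Phi$; concretely, testing the first-order optimality of $\hat{Z}$ in the linearized sub-hedging problem $\inf_Z\EE[Z_T(\vp_T-I(z\hat{Z}_T))]$ against a $Z^\star\in\sZ$ that approximates the super-hedging of $-\vp_T$ should force a strict inequality ruling out $\hat{Z}\in\sZ$. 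The delicate bookkeeping required to bridge the $O(1/z)$ decay of $\hat{\QQ}(B)$ and the subgradient relation is the part of the argument that I expect will demand the most care.
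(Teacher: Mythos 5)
Your steps (1)--(3) are correct and match the opening of the paper's argument exactly: using Theorem~\ref{thm:main2} and the pointwise domination $\uV(y;\vp,\psi)\le V(y)+\vp y$ to collapse to the a.s.\ identity $V(z\hZ_T)+z\hZ_T\vp_T = \uV(z\hZ_T;\vp_T,\uvp_T)$, and then reading off $z\hZ_T\le z_c:= U'(\vp_T-\uvp_T)$ from \eqref{equ:314C}. Your truncation step (4), producing the bound $\hat\QQ(B)\le M\,\PP(B)/z$, is also a legitimate quantitative consequence of this.

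The genuine gap is step (5), and you say so yourself: you never actually derive a contradiction, you only outline a plan (primal-dual conjugacy, the asymptotic slope from Lemma~\ref{lem:4938}, first-order conditions in a linearized sub-hedging problem) and acknowledge that ``the delicate bookkeeping\dots is the part of the argument that I expect will demand the most care.'' An unfinished sketch is not a proof, so this proposal does not establish the corollary.

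For comparison, the paper does not go near any subgradient or primal-dual machinery at this point. Having obtained the a.s.\ bound $z\hZ_T\le z_c$, it picks $z_0$ so that $\PP\big[V(z)+z\vp_T>\uV(z;\vp_T,\uvp_T)\big]>0$ for $z\ge z_0$ (i.e.\ $\PP[z_c<z_0]>0$), asserts that this together with the a.s.\ identity forces $\PP[z\hZ_T<z_0]=1$, and then derives the contradiction from the simplest possible fact: $z\ge z_0$ and $z\hZ_T<z_0$ a.s.\ give $\hZ_T<1$ a.s., which is incompatible with $\EE[\hZ_T]=1$. So the intended closing move is elementary -- it lives entirely at the level of the normalization of the density $\hZ_T$ -- and your $O(1/z)$ decay estimate for $\hat\QQ(B)$ is aimed in a compatible direction but never reaches the needed global conclusion (and cannot, as it stands, since $\hat\QQ(B)\to 0$ alone does not contradict $\hat\QQ(\Omega)=1$: mass can simply migrate to $B^c$). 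You would need to turn your pointwise bound on the whole set $\{\vp_T>\uvp_T\}$ into a statement about $\EE[\hZ_T]$, not just about $\hat\QQ(B)$ for a truncated $B$.
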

\begin{proof}
Since $\uvp_T\ne \vp_T$, with positive probability, there exists a constant
$z_0\in (0,\infty)$ such that
 \begin{equation}
 \label{equ:4616}
 \begin{split}
   \PP[ V(z)+z \vp_T >  \uV(z; \vp_T,\uvp_T) ] >0, \text{ for } z\geq z_0.
 \end{split}
 \end{equation}
Additionally, we have the trivial inequality
$V(z)+z \vp_T  \geq  \uV(z; \vp_T,\uvp_T)$, a.s., for all $z\in(0,\infty)$.  
Suppose, now, that $\hZ = \hZ(z)$ is the dual minimizer, i.e., the minimizer in
\eqref{equ:4C6D}, corresponding to $z\geq
z_0$. By Theorem \ref{thm:main2}, it must also be a minimizer for 
the right-hand side of \eqref{equ:5375}, and it must have the property that
\[ V(zZ_T) + z\hZ_T \vp_T = \uV(z\hZ_T; \vp_T,\uvp_T), \as\]
The inequality in \eqref{equ:4616}, however, implies that
\[ \PP[ z\hZ_T< z_0]=1, \]
which is in contradiction with $z\geq z_0$ and $\EE[ \hZ_T]=1$. 
\end{proof}
In the model of Section \ref{sec:illustrative} one can improve on 
Corollary \ref{cor:nonattain} and show non-attainment
for any $z>0$, provided that $\vp$ does not stay ''too close'' to its
  minimum:
\begin{proposition}
\label{pro:6DD4}
In the setting of Section \ref{sec:illustrative}, we assume that
 \begin{equation}
 \label{equ:666E}
 \begin{split}
   \EE[ U'( \vp(\eta_0+W_T) - \inf\vp)] < \infty.
 \end{split}
 \end{equation}
Then the dual problem 
\eqref{equ:dual_value2} at $(T,\eta_0)$  does not admit a minimizer in
$\sZ$ for any $z>0$.  
\end{proposition}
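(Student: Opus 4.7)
My strategy is to argue by contradiction: suppose $\hZ\in\sZ$ is a dual minimizer at some $z>0$. The plan is first to extract a sharp pointwise upper bound on $z\hZ_T$ from the facelift machinery, and then, using the explicit Black--Scholes structure together with the independence of $B$ and $W$, to show that such a bound is incompatible with $\hZ\in\sZ$ for every $z>0$.

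First I would note that the setting of Section~\ref{sec:illustrative} falls under Theorem~\ref{thm:main2}: Assumption~\ref{ite:smalfac} holds with $\uvp_T=\inf\vp$ by Remark~\ref{rem:B1suf}. Since $\hZ$ achieves $v(T,z)=\EE[V(z\hZ_T)+z\hZ_T\vp(\eta_T)]$ and Theorem~\ref{thm:main2} also gives $v(T,z)=\inf_{Z\in\sZ}\EE[\uV(zZ_T;\vp(\eta_T),\inf\vp)]$, the pointwise inequality $V(z')+z'\vp\geq\uV(z';\vp,\inf\vp)$ integrated against $\hZ_T$ forces the a.s.~identity
\[
V(z\hZ_T)+z\hZ_T\vp(\eta_T)=\uV\bigl(z\hZ_T;\vp(\eta_T),\inf\vp\bigr),
\]
which by the piecewise description \eqref{equ:314C} is equivalent to
\[
z\hZ_T\leq z_c(\eta_T):=U'\bigl(\vp(\eta_T)-\inf\vp\bigr)\quad\text{a.s.}
\]

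Next I would Laplace-transform this bound against $e^{\alpha B_T}$ for an arbitrary $\alpha\in\R$. On the one hand, by the characterization \eqref{dZ}, $\hZ_T$ corresponds to some $\hQ\in\sM$ and Girsanov applied to the $B$-component shows that $\tilde B_t:=B_t+\lambda t$ is a $\hQ$-Brownian motion regardless of the $\hnu$-tilt in $W$, so $B_T\sim N(-\lambda T,T)$ under $\hQ$ and
\[
\EE\bigl[\hZ_T\,e^{\alpha B_T}\bigr]=\EE^{\hQ}\bigl[e^{\alpha B_T}\bigr]=\exp\!\bigl(\tfrac{1}{2}\alpha^2T-\alpha\lambda T\bigr).
\]
On the other hand, since $z_c(\eta_T)$ is a function of $W$ only while $e^{\alpha B_T}$ depends only on $B$, and since $B\perp W$ under $\PP$, the pointwise bound combined with independence yields
\[
\EE\bigl[\hZ_T\,e^{\alpha B_T}\bigr]\leq\tfrac{1}{z}\,\EE[z_c(\eta_T)]\,\EE\bigl[e^{\alpha B_T}\bigr]=\tfrac{M}{z}\,\exp\!\bigl(\tfrac{1}{2}\alpha^2T\bigr),
\]
where $M:=\EE[U'(\vp(\eta_0+W_T)-\inf\vp)]<\infty$ by hypothesis. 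Cancelling the common Gaussian factor leaves $e^{-\alpha\lambda T}\leq M/z$ for every $\alpha\in\R$; since $\lambda=\mu/\sigma\neq 0$, sending $\alpha$ to $+\infty$ or $-\infty$ opposite to the sign of $\lambda$ drives the left-hand side to $+\infty$, producing the desired contradiction.

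The step I expect to require the most care is the first one: extracting the pointwise bound $z\hZ_T\leq z_c(\eta_T)$ from the integrated identity requires first observing that $\hZ$ is simultaneously optimal for the original and the facelifted dual, and then using the strict inequality $V(z')+z'\vp>\uV(z';\vp,\inf\vp)$ on $\{z'>z_c(\vp)\}$ to upgrade the equality in expectation to an a.s.~pointwise statement. Once that bound is in hand, the rest is a one-line moment-generating-function comparison where the independence $B\perp W$ makes any $W$-measurable upper bound incompatible with the intrinsic Gaussian fluctuations of $B_T$ under any $\hQ\in\sM$.
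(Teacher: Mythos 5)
Your proof is correct, and after the first step it takes a genuinely different route from the paper's. Both arguments begin identically: using Theorem~\ref{thm:main2} and Remark~\ref{rem:B1suf} (so $\uvp_T=\inf\vp$), optimality of $\hZ$ forces the a.s.\ pointwise identity $V(z\hZ_T)+z\hZ_T\vp_T=\uV(z\hZ_T;\vp_T,\inf\vp)$, which by strict convexity of $V$ is equivalent to the a.s.\ bound $z\hZ_T\leq z_c(\eta_T)=U'(\vp(\eta_T)-\inf\vp)$; this matches the paper's opening sentence. From there the paper enlarges the filtration to $\sG_t=\sigma(B_u,W_s; u\leq T, s\leq t)$, observes that $\hat H=\EN(\hnu\cdot W)$ remains a nonnegative $\sG$-local martingale so that $\EE[\hat H_T\,|\,\sG_0]\leq 1$ (in fact $=1$), concludes $\EE[\hZ_T\,|\,\sG_0]=\EN(-\lambda\cdot B)_T$, and then derives the contradiction by conditioning the bound $z\hZ_T\leq Y$ on $\sG_0$ and using independence of $Y$ and $\sG_0$, since $\EN(-\lambda\cdot B)_T$ has unbounded support. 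You instead project onto exponential moments: you pair the bound with the test function $e^{\alpha B_T}$, compute $\EE[\hZ_T e^{\alpha B_T}]$ exactly by Girsanov (under $\hQ$, $B_T\sim N(-\lambda T,T)$ independently of the $W$-tilt), bound it from above using $B\perp W$ and $M=\EE[z_c(\eta_T)]<\infty$, and send $\alpha\to\pm\infty$. Your version is somewhat more elementary --- it avoids the filtration-enlargement and local-martingale/supermartingale estimate in favor of a one-line moment-generating-function comparison --- while the paper's version yields the stronger identity $\EE[\hZ_T\,|\,\sigma(B)]=\EN(-\lambda\cdot B)_T$. Both exploit the independence of $B$ and $W$ in an essential way. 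One small point worth making explicit in your write-up: the Girsanov step requires $\hZ$ to be a true (not merely local) martingale, which holds because $\hZ\in\sZ$ is by definition the density process of some $\hQ\in\sM$.
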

\begin{proof}
Given \eqref{equ:666E}, we  assume that there exists $z>0$ and
$\hat{Z}\in\sZ$ which attains the infimum in \eqref{equ:4C6D}. As in the
proof of Corollary \ref{cor:nonattain}, this implies that $z\hat{Z}_T\leq 
Y$, a.s.,  where $Y:= U'\big(\vp(\eta_0+W_T\big) -
\inf\vp)$. Thanks to the special structure of the set $\sZ$ in the model of
Section \ref{sec:illustrative}, there exists a predictable and $W$-integrable 
process $\prf{\hnu_t}$ such that
\[ \hat{Z}_T = \sE(-\lambda \cdot B)_T\hat{H}_T, \quad \hat{H}_t:= \EN( \hnu\cdot W)_t.\]
We define the filtration $\prf{\sG_t}$ as the usual augmentation of  
\[ \sG^{raw}_t:=\sigma( B_u, W_s; u\leq T, s\leq t),\ t\in [0,T].\]
The process $W$ is a $\sG$-Brownian motion and $\hat{\nu}$ is
$\sG$-predictable, so $\hat{H_t}$ is a $\sG$-local martingale and, in
particular, we have $\EE[ \hat{H}_T|\sG_0]\leq 1$. 
Therefore, 
 \begin{equation*}
 \label{equ:4F5E}
 \begin{split}
   e^{\ld B_T - \tot \ld^2 T} 
   \geq 
   e^{\ld B_T - \tot \ld^2 T} 
   \EE[ \hat{H}_T|\sG_0] = \EE[ \hat{Z}_T|\sG_0],
 \end{split}
 \end{equation*}
where the inequality is, in fact, an a.s.-equality since  
both sides have expectation $1$. Using the fact that $Y$ is independent of
$\sG_0$, we conclude that
\[ e^{\ld B_T - \tot \ld^2 T} \leq\tfrac1z \EE[ Y] <\infty,\as,\]
which is a contradiction with the fact (derived from the assumption that
$\mu\ne 0$) that the distribution of 
left-hand side has support $(0,\infty)$.
\end{proof}
\section{Sufficient conditions for \ref{ite:Phi}}
\label{sec:Phi}
Condition \ref{ite:Phi} in Section \ref{sec:General} plays a major role
in the proof of Theorem \ref{thm:main} and guarantees that the process
$\prfo{\vp_t}$ does not oscillate to much as $t\downto 0$. Clearly, it (or
a version of it) must be imposed - indeed, the very form of the facelift
depends on the value (and existence) of the limiting germ price $\Phi$. 
We
present here two sufficient conditions for its validity which apply to a
wide variety of situations often encountered in mathematical finance. 

\subsection{Complete markets}
In the case of a complete market we have:
\begin{proposition}
If $\sM=\set{\QQ}$, for some $\QQ\sim\PP$, then $\ref{ite:Phi}$ holds with
 \begin{equation}
 \label{equ:125F}
 \begin{split}
   \Phi^E=\Phi^A=\vp_0.
 \end{split}
 \end{equation}
\end{proposition}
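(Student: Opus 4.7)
The plan is to exploit the fact that $\sM=\{\QQ\}$ collapses the set $\sZ$ to the single density process $Z^{\QQ}$, so both germ-price functionals reduce to expectations of $\vp_\cdot$ under $\QQ$, and the problem is essentially a bounded-convergence argument. First I would note that $\sZ=\{Z^{\QQ}\}$ with $Z^{\QQ}_t = \EE[\tRN{\QQ}{\PP}\mid\sF_t]$, so that for every $T>0$
\[
\Phi^E_T = \EE^{\QQ}[\vp_T], \qquad \Phi^A_T = \inf_{\tau\in\sS_T}\EE^{\QQ}[\vp_\tau].
\]

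For the European part, right-continuity of the RCLL path $t\mapsto\vp_t$ at $0$ gives $\vp_T\to\vp_0$ as $T\downto 0$, and since $\vp$ is bounded, bounded convergence yields $\Phi^E_T\to\vp_0$; in particular $\Phi^E=\vp_0$.

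For the American part, the upper bound $\Phi^A_T\le\vp_0$ is immediate by taking $\tau=0\in\sS_T$. For the matching lower bound I would select, for each $T$, a stopping time $\tau_T\in\sS_T$ with $\EE^{\QQ}[\vp_{\tau_T}]\le\Phi^A_T+T$; since $0\le\tau_T\le T$, we have $\tau_T\downto 0$, and right-continuity of $\vp$ at $0$ again gives $\vp_{\tau_T}\to\vp_0$ pointwise. Boundedness and a second application of bounded convergence yield $\EE^{\QQ}[\vp_{\tau_T}]\to\vp_0$, so $\liminf_{T\downto 0}\Phi^A_T\ge\vp_0$. Combined with the upper bound this gives $\Phi^A_T\to\vp_0$, consistent with the $\nearrow$ monotonicity noted in Remark \ref{rem:afterPhi}.

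There is no real obstacle; the only point worth flagging is that the approximating stopping times approach $0$ from above, so one must invoke right-continuity of $\vp$ (rather than mere continuity), which is precisely what RCLL supplies.
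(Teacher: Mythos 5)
Your proof is correct and runs on the same engine as the paper's: collapse $\sZ$ to the single density $Z^{\QQ}$, then use right-continuity of the RCLL path $t\mapsto\vp_t$ at $0$ together with boundedness and dominated convergence. The only stylistic difference is in the American lower bound: the paper observes that $\Phi^A_T\geq\EE^{\QQ}\big[\inf_{t\in[0,T]}\vp_t\big]$ (a uniform pathwise lower bound over all $\tau\in\sS_T$) and sends $T\downto 0$ once, whereas you pass through a sequence of near-optimal stopping times $\tau_T\downto 0$. Both are valid; the paper's pathwise-infimum bound is marginally more economical since it avoids the selection of near-optimizers, but the substance is identical.
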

\begin{proof}
It suffices to note that, by the
dominated convergence theorem and the RCLL assumption, we have 
\[ \EE^{\QQ}\Big[ \inf_{t\in [0,T]} \vp_t \Big] \to \vp_0 \eand \EE^{\QQ}[
\vp_T] \to \vp_0,\text{ as } T\downto 0.\qedhere\]
\end{proof}
\subsection{Sufficient controllability}
Our second sufficient condition assumes
that there exists a process
$\prfo{\eta_t}$ with values in some topological space $\fE$ such that
$\vp_t= \vp(\eta_t)$, $t\in [0,1]$ for some continuous and bounded function
$\vp:\fE\to \R$.

We start with a general condition - phrased as a lemma -  which,
heuristically, says that \ref{ite:Phi} holds if $\prfo{\eta_t}$ can be well-controlled towards
any point in $\fE$, within any positive amount of time. To state it, 
for each $T\in (0,1]$ we define the set
$\sD_T$ of $\QQ$-distributions of $\eta_t$, as $\QQ$ ranges through
$\sM$ and $t\in (0,T]$. 
\begin{lemma}
\label{lem:on} 
Suppose that $\vp$ is a bounded and continuous function and that 
for each $\eta\in\fE$ and each $T>0$ 
there exists a sequence $\seq{\mu}$ in $\sD_T$ such that 
$\mu_n \to \delta_{\eta}$, weakly.
Then $\ref{ite:Phi}$ holds with
 \begin{equation}
 \label{equ:7C07}
 \begin{split}
   \Phi^E=\Phi^A=\inf\vp.
 \end{split}
 \end{equation}
\end{lemma}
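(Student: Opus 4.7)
The plan is to show $\Phi^A = \Phi^E = \inf \vp$ by bracketing both quantities between $\inf \vp$ from below and above. The lower direction is routine: for any $\QQ \in \sM$ with density process $Z$ and any bounded stopping time $\tau$, Doob's optional sampling yields $\EE[Z_\tau] = 1$, so $\EE[Z_\tau \vp_\tau] \geq (\inf \vp)\,\EE[Z_\tau] = \inf \vp$. Taking infima over $\sM$ and stopping times in $\sS_T$, and letting $T \downto 0$, we obtain $\Phi^A, \Phi^E \geq \inf \vp$.

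For the upper bound on $\Phi^A$, I would fix $\epsilon > 0$ and, using that $\vp$ is bounded below, pick $\eta^* \in \fE$ with $\vp(\eta^*) < \inf \vp + \epsilon$. For an arbitrary $T \in (0,1]$, the hypothesis supplies a sequence $\mu_n = \text{Law}^{\QQ_n}(\eta_{t_n})$ in $\sD_T$ converging weakly to $\delta_{\eta^*}$, with $\QQ_n \in \sM$ and $t_n \in (0,T]$. Since $\vp$ is bounded and continuous, the portmanteau theorem gives $\int \vp\, d\mu_n \to \vp(\eta^*)$. Because each $t_n$ is a deterministic element of $\sS_T$,
\[
\Phi^A_T \;\leq\; \EE\bigl[Z^{\QQ_n}_{t_n}\, \vp_{t_n}\bigr] \;=\; \int \vp\, d\mu_n \;\to\; \vp(\eta^*) \;<\; \inf \vp + \epsilon.
\]
Letting $\epsilon \downto 0$ and combining with the lower bound gives $\Phi^A_T = \inf \vp$ for every $T \in (0,1]$, hence $\Phi^A = \inf \vp$.

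The main obstacle is the upper bound on $\Phi^E$, since $\Phi^E_T$ is tied to the distribution of $\eta$ at time $T$ exactly, whereas the hypothesis only produces good laws in $\sD_T$ at possibly strictly earlier times $t_n \in (0,T]$. My plan is a diagonal extraction followed by a pasting argument. First, applying the hypothesis at $T_k \downto 0$ yields $\tau_k \in (0, T_k]$ and $\QQ_k \in \sM$ with $\text{Law}^{\QQ_k}(\eta_{\tau_k}) \to \delta_{\eta^*}$; necessarily $\tau_k \downto 0$, and $\Phi^E_{\tau_k} \leq \EE^{\QQ_k}[\vp(\eta_{\tau_k})] \to \vp(\eta^*)$, which already yields $\liminf_{T \downto 0} \Phi^E_T \leq \vp(\eta^*)$. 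To upgrade this to the full $\limsup$ demanded by the definition of $\Phi^E$, I would construct, for each $(\QQ_k, \tau_k)$ with $\tau_k < T$, a pasted measure $\tilde \QQ_k \in \sM$ coinciding with $\QQ_k$ up to $\tau_k$ and with some fixed reference measure in $\sM$ on $[\tau_k, T]$; by right-continuity of $\eta$ and bounded convergence, $\EE^{\tilde \QQ_k}[\vp(\eta_T)] \to \EE^{\QQ_k}[\vp(\eta_{\tau_k})]$ as $T \downarrow \tau_k$, so $\Phi^E_T$ remains within $\epsilon$ of $\vp(\eta^*)$ in a right-neighborhood of $\tau_k$. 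Because the $\tau_k$'s can be placed arbitrarily inside any $(0, \delta]$ by varying $T_k$, these right-neighborhoods cover a cofinal set, and $\limsup_{T \downto 0} \Phi^E_T \leq \vp(\eta^*)$. Infimizing over $\eta^*$ gives $\Phi^E \leq \inf \vp$, matching the lower bound and yielding \ref{ite:Phi} with common value $\inf \vp$.
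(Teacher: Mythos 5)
Your lower bound ($\Phi^A_T,\Phi^E_T\geq\inf\vp$) and the upper bound for $\Phi^A_T$ are correct and amount to the same argument the paper uses. The interesting difference is in the treatment of $\Phi^E$. The paper's proof is much shorter: from $\inf_{\mu\in\sD_T}\int\vp\,d\mu=\inf\vp$ it deduces $\Phi^E_T=\inf\vp$ for each fixed $T>0$ directly, and then obtains $\Phi^A_T=\inf\vp$ from $\inf\vp\leq\Phi^A_T\leq\Phi^E_T$; there is no diagonalization or pasting. Your proof does $\Phi^A$ first and then tries to control $\Phi^E$ via pasting, which is a genuinely different route.

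You have, however, put your finger on a real subtlety that the paper glosses over. Since $\sD_T$ is defined via $\eta_t$ with $t$ ranging over $(0,T]$, the set $\{\mathrm{Law}^{\QQ}(\eta_T):\QQ\in\sM\}$ is only a \emph{subset} of $\sD_T$, so the identity $\inf_{\mu\in\sD_T}\int\vp\,d\mu=\inf\vp$ gives $\Phi^E_T\geq\inf\vp$ (the trivial bound again) rather than $\Phi^E_T\leq\inf\vp$. The paper's step works cleanly only if one reads $\sD_T$ as the $\QQ$-laws of $\eta_T$ itself (or if the approximating laws can be taken at time $T$), in which case $\Phi^E_T=\inf\vp$ is immediate and the pasting machinery is unnecessary. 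So your instinct that $\Phi^E$ requires extra care is right.

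That said, your proposed repair does not actually close the gap. The assertion that ``the $\tau_k$'s can be placed arbitrarily inside any $(0,\delta]$'' and hence that the right-neighborhoods $[\tau_k,\tau_k+\delta_k)$ cover a cofinal set is unjustified: the hypothesis hands you $\tau_k\in(0,T_k]$ with no further control over where $\tau_k$ lands, and a countable collection of right-neighborhoods of an arbitrary null sequence can certainly leave gaps near $0$ on which $\Phi^E_T$ is uncontrolled. (The pasting of two elements of $\sM$ across a stopping time is fine, and your bounded-convergence step as $T\downarrow\tau_k$ is fine; it is the covering claim that is the hole.) Your argument establishes $\liminf_{T\downto 0}\Phi^E_T\leq\inf\vp$ but not the $\limsup$ version that the definition of $\Phi^E$ requires. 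To make the lemma rigorous with the $t\in(0,T]$ reading of $\sD_T$ one would need either to strengthen the hypothesis (laws converging at time $T$ exactly) or to argue from the structure of the specific construction, as in the proof of Theorem~\ref{thm:suff}, where the drift-pushing times can be tuned to $T$.
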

\begin{proof}
Since $\vp$ is continuous and bounded, the assumptions imply that
\[ \inf_{\mu\in\sD_T} \int \vp(\eta)\, d\mu(\eta)= \inf \vp.\]
Therefore, $\Phi^E_T = \inf \vp$ for $T\in(0,1]$. Recalling that, by
construction, $\inf\vp \leq \Phi^A_T\leq \Phi^E_T$ 
for all $T\in(0,1]$, we conclude that
\eqref{equ:7C07} holds. 
\end{proof}
Next, we describe a large class of models with $\fE:=\R^d$ to which Lemma
\ref{lem:on} applies. We start by fixing a filtered probability space with a filtration
$\FFF$ satisfying the usual conditions.  Let $\sS^g$ denote the 
set of all $\R^d$-valued semimartingales  
$R$ with $R_0=0$ for which there exists
\begin{enumerate}
\item a semimartingale decomposition $R=M+F$ into a local martingale $M$
and a finite-variation process $F$, and
\item a (deterministic) function
$g_R:[0,1]\to [0,\infty)$ with $\lim_{T\downto 0} g_R(T)= 0$, 
\end{enumerate}
such that, with $\abs{F}$ denoting the total-variation process of $F$ and
$[M,M]$ the quadratic-variation process of $M$, we have
\[ \abs{F}_T+[ M,M ]_T\leq g_R(T), \as\text{ for all } T\in [0,1].\]
We note that, a posteriori, membership in $\sS^g$ immediately makes any
semimartingale special and we can (and do) talk about its unique
semimartingale decomposition without ambiguity.
\begin{remark}

\label{rem:635A}
An example of an element in the class $\sS$ is a process of the form
 \[ R_t= \int_0^t \alpha_u\, du + \int_0^t \beta_u\, dB_u+\int_0^t
 \gamma_u\, dN_u,\  t\in [0,1],\]
 where $\alpha$, $\beta$  and $\gamma$ are uniformly bounded predictable processes
 valued in, respectively $\R^d$,   $\R^{d\times d}$  and $\R^d$; 
 $B$ is a $d$-dimensional Brownian motion while $N$ is a $d$-dimensional
 Poisson process. 
\end{remark}
The class $\sS^g$ is important in our setting because it admits moment
estimates uniform over all equivalent measure changes that preserve the
semimartingale decomposition. The next result follows directly from the
Burkholder-Davis-Gundy inequalities (see Theorem 48, p.~193 in \cite{Pro04}), and we skip the proof:
\begin{lemma} 
\label{lem:BDG}
For each $R\in \sS^g$ with the semimartingale decomposition $R=M+F$
there exists a function $h_R:[0,1]\to (0,\infty)$ with $h_R(t)
\to 0$, as $t\downto 0$ such that
\[ \EE^{\QQ} [\abs{R_{t}}]   \leq h_R(t)\text{ for all } t\in [0,1],\]
for any 
$\QQ\sim\PP$ such that $M$ is a
$\QQ$-local martingale. 
\end{lemma}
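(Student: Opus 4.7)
The plan is to split $|R_t|\leq |M_t|+|F_t|$ and bound each piece separately, using that the two defining estimates $|F|_T\leq g_R(T)$ and $[M,M]_T\leq g_R(T)$ are \emph{pathwise} statements that are therefore preserved under any equivalent change of measure $\QQ\sim\PP$.

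For the finite-variation part, pathwise a.s.\ we have $|F_t|\leq |F|_t\leq g_R(t)$, hence $\EE^\QQ[|F_t|]\leq g_R(t)$ with no dependence on $\QQ$. For the local martingale part, I would invoke the Burkholder-Davis-Gundy inequality with exponent $p=1$. The crucial input is the hypothesis in the statement that $M$ is a $\QQ$-local martingale, which is exactly what allows BDG to be applied under $\QQ$ and produces a universal constant $C_1$ independent of $\QQ$:
\[
\EE^\QQ[|M_t|]\leq \EE^\QQ\Big[\sup_{s\leq t}|M_s|\Big]\leq C_1\, \EE^\QQ\big[[M,M]_t^{1/2}\big].
\]
Since the quadratic variation $[M,M]$ is a pathwise quantity and $[M,M]_t\leq g_R(t)$ a.s., we get $\EE^\QQ[[M,M]_t^{1/2}]\leq g_R(t)^{1/2}$.

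Combining the two estimates gives the uniform bound
\[
\EE^\QQ[|R_t|]\leq C_1\, g_R(t)^{1/2}+g_R(t)=:h_R(t),
\]
and $h_R(t)\to 0$ as $t\downto 0$ since $g_R(t)\to 0$. In the multidimensional case one applies the argument componentwise (or uses the multidimensional BDG inequality with $[M,M]:=\sum_i [M^i,M^i]$), adjusting $C_1$ by a dimensional constant. There is no genuine obstacle: the only subtlety is the emphasis that $M$ must remain a local martingale under $\QQ$ (a hypothesis of the lemma) so that BDG applies under $\QQ$; the pathwise nature of $|F|$ and $[M,M]$ then makes the bound automatically uniform in $\QQ$.
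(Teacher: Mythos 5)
Your proof is correct and is exactly the argument the paper intends: the paper skips the proof entirely, pointing only to the Burkholder--Davis--Gundy inequalities (Theorem 48, p.~193 in Protter), and your split of $R$ into its local-martingale and finite-variation parts, the pathwise bounds $|F_t|\leq g_R(t)$ and $[M,M]_t\leq g_R(t)$ being measure-independent, and the $p=1$ BDG inequality under $\QQ$ are precisely the steps being invoked. The only cosmetic point is that $h_R$ is required to take values in $(0,\infty)$, so one should add a strictly positive vanishing term (e.g.\ set $h_R(t):=C_1 g_R(t)^{1/2}+g_R(t)+t$) to rule out $h_R(t)=0$ when $g_R(t)=0$.
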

\begin{theorem}
\label{thm:suff}
Suppose that the $\R^m$-valued 
process $S$ and the $\R^d$-valued factor process $\eta$
are semimartingales which satisfy the following assumptions:
\begin{enumerate}
\item there exists a $\P$-equivalent measure $\Q^0$, such that $S$ is local martingale,
\item the process $\prfo{\eta_t}$ is of the form  
\[ \eta_t = \eta_0 + \int_0^t \beta_u\, dW_u+R_t,\ t\in [0,1],\]
where 
$W$ is a Brownian motion strongly orthogonal to $S$ and to $Z^0$ (the
density process of $\QQ^0$ w.r.t. $\PP$), $\beta$ is a
bounded predictable process whose absolute value is bounded away from $0$,
and
$R\in \sS^g$ with the semimartingale decomposition $R=M+F$ where $M$ is
strongly orthogonal to $W$. 
\end{enumerate}
Then the condition \ref{ite:Phi} holds and 
   $\Phi^E=\Phi^A=\inf \vp$.
\end{theorem}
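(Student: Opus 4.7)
The plan is to invoke Lemma \ref{lem:on}: it suffices to show that for every $\eta^{*}\in\fE=\R^{d}$ and every $T\in(0,1]$ there is a sequence $\seq{\mu}$ in $\sD_{T}$ with $\mu_{n}\to\delta_{\eta^{*}}$ weakly, which then yields $\Phi^{E}=\Phi^{A}=\inf\vp$ via the combination with the bounded-continuous function $\vp$. The core idea is a one-parameter family of Girsanov transformations that inserts a drift into $W$ on a shrinking interval $[0,t_{n}]$, sized precisely so that the drift of $\eta$ delivers it exactly to $\eta^{*}$, while the strong-orthogonality hypotheses guarantee the resulting measures still sit inside $\sM$ and that $R$ remains small under them.

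Concretely, fix $\eta^{*}\in\R^{d}$ and $T>0$, choose $t_{n}\downto 0$ with $t_{n}\le T$, and define the predictable process
\[
\nu^{n}_{u}:=\tfrac{1}{t_{n}}\beta_{u}^{+}(\eta^{*}-\eta_{0})\ind{u\le t_{n}},
\]
where $\beta_{u}^{+}$ is a right-inverse of $\beta_{u}$, uniformly bounded because $\abs{\beta}$ is bounded away from $0$. Then $\nu^{n}$ is deterministically bounded on $[0,t_{n}]$ (by $C/t_{n}$) and zero afterwards, so $H^{n}:=\sE(\nu^{n}\cdot W)$ is a true $\PP$-martingale by Novikov. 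Set
\[
\tfrac{d\QQ_{n}}{d\PP}\Big|_{\sF_{T}}:=Z^{0}_{T}\,H^{n}_{T}.
\]
The strong orthogonalities $\ab{W,S}=\ab{W,Z^{0}}=0$ imply that $W$ is a $\QQ^{0}$-Brownian motion and that $\ab{S,H^{n}}^{\QQ^{0}}=0$, so $S$ is still a local martingale under $\QQ_{n}$. Hence every $X\in\sX$ remains a $\QQ_{n}$-local martingale bounded from below, i.e.\ a supermartingale, so $\QQ_{n}\in\sM$.

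It remains to establish the concentration. Under $\QQ_{n}$, the process $\tilde{W}_{u}:=W_{u}-\int_{0}^{u}\nu^{n}_{s}\,ds$ is a Brownian motion, and by the choice of $\nu^{n}$
\[
\eta_{t_{n}}=\eta^{*}+\int_{0}^{t_{n}}\beta_{u}\,d\tilde{W}_{u}+R_{t_{n}}.
\]
The stochastic integral has $\QQ_{n}$-second moment bounded by $\|\beta\|_{\infty}^{2}t_{n}\to 0$, hence vanishes in $\QQ_{n}$-probability. For the $R$ term, Lemma \ref{lem:BDG} applied under $\QQ_{n}$ gives $\EE^{\QQ_{n}}[\abs{R_{t_{n}}}]\le h_{R}(t_{n})\to 0$, provided $M$ is a $\QQ_{n}$-local martingale. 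This is where the hypothesis $\ab{M,W}=0$ is used (it annihilates the Girsanov correction contributed by $H^{n}$), together with the analogous orthogonality with $Z^{0}$ built into the model (canonical in the setting of Remark \ref{rem:635A}, where $Z^{0}$ is driven by the Brownian motions driving $S$, independent of $W$ and of the noise in $R$). Combining the two estimates, $\eta_{t_{n}}\to\eta^{*}$ in $\QQ_{n}$-probability, so the $\QQ_{n}$-law of $\eta_{t_{n}}$ converges weakly to $\delta_{\eta^{*}}$, and Lemma \ref{lem:on} concludes.

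The main technical obstacle is precisely the verification that $M$ remains a local martingale after both changes of measure: the $W$-orthogonality manages the step from $\QQ^{0}$ to $\QQ_{n}$, while the parallel orthogonality with $Z^{0}$ - built into the model through the strong decoupling between the traded sector and the factor noise - manages the step from $\PP$ to $\QQ^{0}$. Once $M$ survives as a local martingale, the uniform estimate in Lemma \ref{lem:BDG} is exactly what counterbalances the fact that $\nu^{n}$ blows up like $1/t_{n}$, and the whole construction reduces to a small-time estimate that is completely insensitive to the choice of Girsanov density.
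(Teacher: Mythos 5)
Your proposal follows essentially the same route as the paper: define controls $\nu^n$ that steer the $dW$-component of $\eta$ to any target $\eta^*$ on a shrinking window, change measure via $Z^0 H^n$ with $H^n = \sE(\nu^n\cdot W)$, use Lemma~\ref{lem:BDG} to kill the $R$-term and the short-window stochastic integral to obtain $\mu_n\toweak\delta_{\eta^*}$, and finish by Lemma~\ref{lem:on}. The only structural difference is cosmetic (you use a right-inverse $\beta^+$ and a generic sequence $t_n\downto 0$ where the paper uses $\beta^{-1}$ and $t_n=1/n$), and your explicit check that $[S,H^n]=0$ keeps $S$ a $\QQ_n$-local martingale is a welcome extra step.

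Two small points are worth tightening. First, you invoke Novikov under $\PP$ to conclude that $H^n$ is a $\PP$-martingale, but what you actually need is that $Z^0 H^n$ is a $\PP$-martingale so that $\QQ_n$ is a genuine probability measure. The cleaner route, and the one the paper takes, is to note that $[W,Z^0]=0$ (together with the continuity of $W$) makes $Z^0H^n$ a $\PP$-local martingale and makes $W$ a $\QQ^0$-Brownian motion, so that $\EE[Z^0_TH^n_T]=\EE^{\QQ^0}[H^n_T]=1$ by Novikov under $\QQ^0$; boundedness of $\nu^n$ then does the rest. Being a $\PP$-martingale and being a $\QQ^0$-martingale are a priori different properties for $H^n$, and it is the latter that is relevant. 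Second, you correctly flag that invoking Lemma~\ref{lem:BDG} under $\QQ_n$ requires $M$ to remain a $\QQ_n$-local martingale, and that orthogonality of $M$ to $W$ handles only the $H^n$-step of the measure change; the $Z^0$-step needs (strong) orthogonality of $M$ to $Z^0$ as well, which the theorem does not list explicitly. The paper's own proof glosses over the same point with the phrase ``thanks to the orthogonality assumption,'' so your more candid treatment (attributing it to the implicit decoupling between the traded sector and the factor noise) is at least as rigorous as the original.
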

\begin{proof}
We start by constructing a large-enough subfamily of the  
family of local martingale measures. 
For a bounded, predictable
process $\prfo{\nu_t}$ we define the following two processes:
\begin{align*}
H^{\nu}_t := \EN(\nu\cdot W)_t,\quad Z^{\nu}_t := Z^0_t H^{\nu}_t,\quad t\in [0,1].
\end{align*}
The strong orthogonality between $W$ and $Z^0$ and the continuity of $W$ ensure  that $[W,Z^0]\equiv 0$, hence, $Z^\nu$ is a local martingale. To see that $Z^\nu$ is a martingale we note that
$$
\E[Z^\nu_T] = \E[ Z_T^0 H^\nu_T] = \E^{\QQ^0}[H^\nu_T].
$$
Since $W$ remains a Brownian motion under $\Q^0$ and since $\nu$ is bounded we can use Novikov's condition to see $\E^{\QQ^0}[H^\nu_T]=1$ from which the martingale property follows. We can then define $\frac{d\Q^{\nu}}{d\P} := Z^\nu_T$.

We fix a constant $\eta\in\R$. For $n\in\N$ we define the bounded process
\[ \nu^n_t := \begin{cases}
\tfrac{n (\eta-\eta_0)}{\beta_t}, & t\leq 1/n, \\ 0, & \text{ otherwise.} 
\end{cases}\]
Then $\eta_0+\int_0^t \beta_u \nu^n_u\, du = \eta$, for $t\geq 1/n$, and
\[ \eta_{1/n} = \eta_0 + R_{1/n} + \int_0^{1/n} \beta_u\, dW_u= \eta + R_{1/n} + \int_0^{1/n} \beta_u\, dW^n_u,\]
where $W^n_t := W_t - \int_0^t \nu^n_u\, du$. 
Thanks to the orthogonality assumption, 
the local-martingale part $M$ of $R$ is a $\QQ^{\nu^n}$-local martingale, and so
$\QQ^{\nu_n}$ and $R$ satisfy 
the conditions of Lemma \ref{lem:BDG}. Moreover, $W^n$ is a
$\QQ^{\nu_n}$-Brownian motion, so
\[ \EE^{\QQ^{\nu^n}}\Big[ \babs{\eta_{1/n}-\eta}\Big]\leq C \left( 
\EE^{\QQ^{\nu^n}}\Big[
\babs{R_{1/n}}\Big]+\EE^{\QQ^{\nu^n}}\Big[\babs{\textstyle\int_0^{1/n}
\beta_u\, d\tW_u}\Big]\right),\]
for some constant $C$.
The right-hand side is bounded from above 
by a linear combination of $h_R(1/n)$,
 and $1/n$, so it converges to $0$ as $n\to\infty$. 
Therefore, we have
\[ \int_{\R^d} \abs{x-\eta}\, \mu_n(dx) \to 0,\text{ as } n\to\infty,\]
where $\mu_n$ denotes $\eta_{1/n}$'s distribution under $\QQ^{\nu^n}$.
Consequently, $\mu_n \toweak \delta_{\eta}$ and Lemma \ref{lem:on} can be applied. 
\end{proof}
\begin{remark}
A family of examples of models which satisfy Assumption (1) of
Theorem \ref{thm:suff}
is furnished by
processes of the form
 \begin{equation}
 \label{equ:5AB0}
 \begin{split}
   dS_t := \mu_t\, dt+\sigma_t\, dB_t+dJ_t,\quad S_0 \in\R,
 \end{split}
 \end{equation}
where $B$ is an $\FFF$-Brownian motion independent
of $W$, $J$ is an $\FFF$-local martingale (possibly with jumps) which is
strongly orthogonal to $B$, and $\mu$ and $\sigma$ are predictable
processes. To apply Theorem \ref{thm:suff} it suffices to note that the
process $Z_t^0:=\EN(- \mu/\sigma\cdot B)_t$ is a strictly positive
martingale whenever $\mu/\sigma$ is sufficiently integrable.
\end{remark}


\def\ocirc#1{\ifmmode\setbox0=\hbox{$#1$}\dimen0=\ht0 \advance\dimen0
  by1pt\rlap{\hbox to\wd0{\hss\raise\dimen0
  \hbox{\hskip.2em$\scriptscriptstyle\circ$}\hss}}#1\else {\accent"17 #1}\fi}
  \ifx \cprime \undefined \def \cprime {$\mathsurround=0pt '$}\fi\ifx \k
  \undefined \let \k = \c \fi\ifx \scr \undefined \let \scr = \cal \fi\ifx
  \soft tundefined \def \soft
  {\relax}\fi\def\ocirc#1{\ifmmode\setbox0=\hbox{$#1$}\dimen0=\ht0
  \advance\dimen0 by1pt\rlap{\hbox to\wd0{\hss\raise\dimen0
  \hbox{\hskip.2em$\scriptscriptstyle\circ$}\hss}}#1\else {\accent"17 #1}\fi}
  \ifx \cprime \undefined \def \cprime {$\mathsurround=0pt '$}\fi\ifx \k
  \undefined \let \k = \c \fi\ifx \scr \undefined \let \scr = \cal \fi\ifx
  \soft tundefined \def \soft {\relax}\fi
\providecommand{\bysame}{\leavevmode\hbox to3em{\hrulefill}\thinspace}
\providecommand{\MR}{\relax\ifhmode\unskip\space\fi MR }
\providecommand{\MRhref}[2]{%
  \href{http://www.ams.org/mathscinet-getitem?mr=#1}{#2}
}
\providecommand{\href}[2]{#2}

\end{document}